\DeclareMathOperator*{\argmax}{arg\,max}
\newtheorem{theorem}{Theorem}[section]
\newtheorem{lemma}{Lemma}[section]
\newtheorem{prop}{Proposition}
\newtheorem{defn}{Definition}
\newcommand{\init}{\mathbf{x}}
\newcommand{\fin}{\mathbf{y}}
\newcommand{\final}{\mathbf{y}_{\pi}^*}
\newcommand{\K}{K}
\newcommand{\KK}{\tilde{K}}
\newcommand{\ws}{\ell_O}
\newcommand{\wrr}{\ell_R}
\newcommand{\I}{\mathcal{I}}
\newcommand{\ubar}[1]{\text{\b{$#1$}}}
\newcommand{\E}{\mathbb{E}}
\newcommand{\R}{\mathbb{R}}
\newcommand{\dist}{F}
\newcommand{\goal}{\mathcal{Y}}
\newcommand{\mean}{\mu}
\newcommand{\lol}[1]{\ubar{\theta}_{#1}}
\newcommand{\hil}[1]{\bar{\theta}_{#1}}
\newcommand{\curve}{\mathcal{Z}(\init)}
\newcommand{\norm}[1]{\|{#1}\|}
\title{\LARGE
Optimal Information Provision for Strategic Hybrid Workers
}
\author{Sohil Shah, Saurabh Amin, Patrick Jaillet
\thanks{S. Amin is with the Laboratory for Information and Decision Systems, P. Jaillet is with the Department of Electrical Engineering and Computer Science, Laboratory for Information and Decision Systems, and Operations Research Center, S. Shah is with the Laboratory for Information and Decision Systems, and Operations Research Center, Massachusetts Institute of Technology (MIT), Cambridge, MA, USA, \{sshah95,amins, jaillet\}@mit.edu}}
\date{}
\begin{document}

\maketitle

\begin{abstract}
We study the problem of information provision by a strategic central planner who can publicly signal about an uncertain infectious risk parameter. Signalling leads to an updated public belief over the parameter, and agents then make equilibrium choices on whether to work remotely or in-person. The planner maintains a set of desirable outcomes for each realization of the uncertain parameter and seeks to maximize the probability that agents choose an acceptable outcome for the true parameter. We distinguish between stateless and stateful objectives. In the former, the set of desirable outcomes does not change as a function of the risk parameter, whereas in the latter it does. For stateless objectives, we reduce the problem to maximizing the probability of inducing mean beliefs that lie in intervals computable from the set of desirable outcomes. We derive the optimal signalling mechanism and show that it partitions the parameter domain into at most two intervals with the signals generated according to an interval-specific distribution. For the stateful case, we consider a practically relevant situation in which the planner can enforce in-person work capacity limits that progressively get more stringent as the risk parameter increases. We show that the optimal signalling mechanism for this case can be obtained by solving a linear program. We numerically verify the improvement in achieving desirable outcomes using our information design relative to no information and full information benchmarks. 

\end{abstract}

\section{Introduction}
\subsection{Motivation}
\noindent The COVID-19 pandemic sparked tremendous interest in practical tools to mitigate disease spread \cite{ji_lockdown_2020,nowzari_analysis_2016,drakopoulos_efficient_2014, chernozhukov_causal_2021}. One can distinguish between two types of non-pharmaceutical interventions central planners use: (i) hard and (ii) soft. 

\noindent \textit{Hard interventions} are rigorously enforced measures such as lockdowns, capacity limits or mask mandates. They prohibit certain actions that risk infectious spread. Balancing these interventions with their negative economic impact has recently been an active area of research \cite{acemoglu_optimal_2020,acemoglu_optimal_2021,birge_controlling_2020}. These measures are indeed effective at flattening the contagion curve, particularly in the early phases of a pandemic when cures and vaccines are unavailable \cite{anderson_how_2020}. However, these measures alone are not viable in the long-term as they tend to be economically and socially 
costly. 

\noindent On the other hand, \textit{soft interventions} aim to \textit{influence} agents to choose less risky actions. For example, a central planner can influence agents' choices by incentivizing safer alternative actions or penalizing riskier actions. Soft interventions are generally much less costly to implement than hard interventions, however their design is a nascent area of study \cite{de_vericourt_informing_2021,ely_rotation_2021,hernandez-chanto_contagion_2021}. Our work in this paper is motivated by the public health gains that can be achieved by strategic provision of information. The basic idea is that a central planner can generate stochastic, possibly coarse signals into the true value of an uncertain parameter that factors into agent utilities. The signals reshape agents' beliefs over this parameter and thereby influence their chosen actions to induce a socially desirable outcome. In practice, such signals can correspond to planner's broadcast on the virulence of a disease strain or reporting case counts to a particular granularity. 

\noindent Specifically, we focus on how to optimally provision information to strategic hybrid workers. A workforce has $\K$ groups of non-atomic agents, each deriving value from working in-person. However, there is a stochastic cost associated with becoming infected that increases with the number of in-person contacts and an unknown continuous stochastic parameter, $\theta \in \Theta$, measuring the disease's infectious risk. The central planner has imperfect information about $\theta$, but can publicly generate signals into the true value of $\theta$ using an information mechanism. On observing the signal, agents' public belief over $\theta$ is updated and they simultaneously make equilibrium choices on where to work. The planner is strategic in that it values a set of outcomes as desirable based on the true value of $\theta$ and some prescribed public health criteria. The novelty of our information provision problem (see Section \ref{sec:model}) is that these criteria only need to be described by a desirable set of outcomes for each parameter value. We distinguish between stateless and stateful objectives; in the former the set of desirable outcomes does not change as a function of the risk parameter, whereas in the latter it does. The central planner's objective is to maximize the probability that a desirable outcome is achieved. We investigate how to optimally provision information and evaluate its effectiveness to further the planner's objective.

\noindent In Section \ref{sec:static_id}, we derive the optimal signalling mechanisms for stateless objectives. We first characterize the equilibrium that agents will achieve as a function of their beliefs over $\theta$ in Proposition \ref{prop:eqbm_formula_prop_1}. In equilibrium, if there are agents working remotely from a group deriving higher benefit from in-person work than another group, then all agents in the other group also work remotely. This restricts the set of possible equilibria to a one-dimensional manifold in the $\K$-dimensional simplex. The intersection of this manifold and the desirable set of outcomes is precisely the set of outcomes the planner can hope to induce via information provision. In Lemmas \ref{lemma:m_smooth}, we characterize the mapping between posterior mean beliefs over $\theta$ and all achievable equilibria and in Theorem \ref{thm:stateless}; we derive the optimal mechanism. We find that the signalling depends on the position of the prior mean belief over $\theta$ relative to the subset of posterior mean beliefs in $\Theta$ that induce desirable outcomes. The mechanism uses at most two signals and partitions the parameter domain into at most two intervals with the signals generated according to an interval-specific distribution. This result adds to a growing literature on the optimality of monotone partitional signalling mechanisms \cite{dworczak_simple_2019,kolotilin_optimal_2018,guo_interval_2019}.

\noindent We next consider stateful outcome sets that enforce in-person capacity limits that progressively get more stringent as $\theta$ increases in Section \ref{sec:stateful}. For the sake of simplicity, we consider $\Theta$ to be discrete and identify a linear program that solves for the optimal objective achievable with a signalling mechanism (Proposition \ref{prop:stateful}). 

\noindent We then present in Section \ref{sec:numerical} a numerical comparison of our signalling mechanisms against two benchmarks: (i) revealing the true $\theta$ (full information) and (ii) revealing nothing (no information). Our signalling mechanisms significantly improve on these benchmarks for both stateless and stateful capacity objectives. Due to space
limitation all proofs are deferred to the appendix.

\subsection{Related Literature}
\noindent Information design has been active area of research in the economics community spawned by the seminal results shown by \cite{kamenica_bayesian_2011} (see \cite{candogan_information_2020, bergemann_information_2019, kamenica_bayesian_2019} for useful reviews). 

\noindent Our contribution adds to the recent literature on information provision for multiple agents over a continuous unknown parameter. The authors of \cite{gentzkow_rothschild-stiglitz_2016} show the equivalence between signalling mechanisms over continuous parameters and mean-preserving contractions of the parameter's prior distribution. This insight is useful for reducing the complexity of search over signalling mechanisms. The authors in \cite{candogan_optimal_2021} build on this idea to characterize an optimal recommendation system for a discrete number of agents that are uncertain about a system state and must reveal a privately held type. Their results characterize how the designer must choose his recommended actions to maximize his quasilinear objective over these actions while retaining incentive compatibility. We instead focus on a setting with non-atomic agents and a broader class of objectives that need only be represented by sets of outcomes. 

\noindent With regard to the design of soft interventions to mitigate disease spread, the authors of \cite{ely_rotation_2021} consider the optimal design of rotation schemes where in-person workers alternate on some schedule. Related to information provision, \cite{hernandez-chanto_contagion_2021} presents an activity-based model where the principal informs agents of the infection rate and identifies when full disclosure maximizes society's expected welfare. Most closely related to our work, \cite{de_vericourt_informing_2021} presents a macro-perspective on how central planners can provision information about unknown stochastic risk during a pandemic. Our setting is more general than \cite{de_vericourt_informing_2021}, as their central planner seeks to globally optimize over an objective function that takes a weighted average of functional measures of the economic health and infectious health. In contrast, our setup allows for a larger class of objectives that can incorporate other practically relevant factors. Our results also go beyond a binary distribution over the infectious risk by considering the unknown parameter as being drawn from any bounded, continuous distribution. There is also significant recent literature on infection dynamics in mean-field games \cite{olmez_modeling_2021,la_torre_mobility_2022}, however our paper focuses on information design after a stochastic representation of infectious risk has been obtained. Finally, the implications of our results for designing soft interventions are complementary to the works on containing infectious spread through hard interventions \cite{acemoglu_optimal_2020,acemoglu_optimal_2021,birge_controlling_2020,cianfanelli_lockdown_2021,wu_optimal_2020}.

\section{Model and Problem Formulation}
\label{sec:model}
\noindent We consider a model where strategic hybrid workers (agents) choose whether or not to work at a shared workspace amidst a pandemic of a communicable disease. Agents face uncertainty over the infectiousness of the disease (parameter), denoted by $\theta$, and thus over their likelihood of incurring the cost associated with becoming ill. In Sec. \ref{subsec:model_agent}, we describe agents' utilities and their dependence on the unknown parameter.

\noindent A central planner overseeing the agents maintains preferences over the aggregate outcomes of agents. The preferences may or may not depend on the true parameter. For a set of desirable agent outcomes for each value of the parameter, the planner seeks to maximize the probability that the equilibrium outcome lies in the set corresponding to the true parameter. Despite facing the same uncertainty over the parameter as the agents, the planner seeks to implement a mechanism that stochastically generates a publicly disclosed signal into the parameter's true value. The signal is revealed before agents make their choices, thereby influencing public belief over the parameter and affecting the equilibrium that is achieved. In Sec. \ref{subsec:model_signalling_cp}, we formalize this class of signalling mechanisms and identify the planner's problem of designing the optimal mechanism.
\subsection{Agent Behavior and Infectious Risk}
\label{subsec:model_agent}
\noindent We consider a unit mass of risk-neutral, Bayesian-rational agents. Each agent simultaneously commits to either working in-person at a shared workspace ($\ws$) or working remotely ($\wrr$). Each agent belongs to a group $k \in [\K] \coloneqq \{1,..,\K\}$. We represent the size of each group using the mass vector $\init \coloneqq \big(x_1, .., x_{\K}\big) \in \R_+^\K$ with the mass of agents belonging to each group $k$, $x_k$, being common knowledge. Agents' choices result in a \textit{remote mass} vector $\fin \coloneqq \big(y_1, .., y_{\K}\big)$, where $0 \leq y_k \leq x_k$ is the mass of agents from group $k$ who elected to work remotely. The \textit{in-person mass} vector is then $\init-\fin$, with $x_k - y_k \geq 0$ agents from group $k$ working in-person. Letting $\|\cdot\|$ correspond to the $L^1$-norm, note that $\|\init\| = 1 = \|\fin\| + \|\init-\fin\|$.

\noindent Any agent in any group $k$ who works remotely faces no infectious risk, but also receives no benefit from the work environment. Consequently, their utility $u_k(\wrr,\fin;\theta) = 0$ independent of $\theta$ and $\fin$. 

\noindent Agents from group $k$ that work in-person, however, each receive a commonly-known benefit $v_k$ where $v_k > 0$. The benefit encapsulates the personal value agents derive from working in-person net any travel costs. Without loss of generality, we assume that $v_k$ are distinct and we index groups so that $v_k$ is strictly decreasing in $k$. 

\noindent Agents that work in-person face both an uncertain risk of contracting the disease per infectious contact and also face uncertain costs in the event they do become ill. Both the transmissivity and severity of the disease are natural sources of public uncertainty as they cannot be easily estimated by individual agents. The overall risk agents can expect to incur directly depends on the in-person mass in a known way, as the increased frequency of contacts and likelihood of infected individuals being present can be easily estimated. Since agents are risk neutral, we capture the product of the uncertain risk per contact and uncertain costs into an uncertain \textit{infectiousness parameter} $\theta \in \Theta := [0,M]$ for some $M \in \R_+$. Critically, we assume that the agents cannot perceive the value of the parameter $\theta$ that was realized for this specific disease, but that the agents and planner all commonly believe a prior distribution $\dist$ over $\Theta$ from which $\theta \sim \dist$.

\noindent We model the (expected) stochastic infectious cost $\beta(\theta, \fin)$ to be linear in $\theta$ and decreasing in the remote mass (increasing for in-person mass). Specifically, for commonly-known, positive, differentiable functions $c_1, c_2: [0,1] \rightarrow \mathbb{R}_+$ with $c_1$ decreasing, $c_2$ non-increasing and $c_1(1) = c_2(1) = 0$:
\begin{align}
    \label{eqn:infectious_cost}
    \beta(\theta, \fin) \coloneqq \theta c_1(\|\fin\|)+c_2(\|\fin\|)
\end{align}
We further motivate this form of the infectious cost in greater detail in Appendix $\ref{appendix:aisys}$. Given the actions of all agents $\fin$ and the true parameter $\theta$, the utility of agents of group $k$ working in-person is:
\begin{align}
\label{eqn:utility_fxn}
    u_k(\ws,\fin;\theta) =  v_k - \beta(\theta, \fin)
\end{align} 

\subsection{Signalling and Central Planner Objective}
\label{subsec:model_signalling_cp}
\noindent The planner can implement a \textit{signalling mechanism} which publicly provisions information into the true realization of $\theta$ to all agents. Specifically, we assume that the planner publicly commits to and discloses a mechanism $\pi = \langle \{g_\theta\}_{\theta \in \Theta}, \I \rangle$ where $\{g_\theta\}_{\theta \in \Theta}$ are a set of probability distributions with each $g_\theta$ distributed over the set of \textit{signals} $\I$. If the true realization of the infectious risk is $\Tilde{\theta} \in \Theta$, then a signal $i \in \I$ is randomly generated with probability $g_{\Tilde{\theta}}(i)$. The planner does not observe the true parameter value before committing to $\pi$. Thus, the planner cannot influence the mechanism after the true parameter is realized.

\noindent Prior to choosing their actions, all agents view $i \in \I$ and symmetrically update their belief over $\theta$ to a posterior distribution $\dist_i$ where:
\begin{align}
    F_i(t) &= \mathbb{P}[\theta \leq t|i] = 
    \frac{\int_{0}^{t} g_\theta(i) d\dist(\theta)}{\int_0^M g_\theta(i) d\dist(\theta)} \label{eqn:F_i}
\end{align}

\noindent For any given $\pi$, it is useful to represent the corresponding probability with which each signal $i \in \I$ is generated and the posterior mean of $\theta$ upon viewing each signal $i \in \I$. If agents' strategies only depend on their mean beliefs over $\Theta$, this representation is equivalent to a \textit{direct mechanism} where the planner performs the update on the agents' behalf and simply shares the updated posterior mean with all agents. We denote such a mechanism by the set of tuples $\mathcal{T}_{\pi} = \{(q_i, \theta_i)\}_{i \in \mathcal{I}}$ where:
\begin{align}
    q_i &\coloneqq \int_{0}^{M} g_{\theta}(i) d\dist(\theta) \hspace{1em} [\text{signal probability}] \label{eqn:q_i}\\
    \theta_i &\coloneqq \frac{\int_{0}^{M} \theta g_{\theta}(i) d\dist(\theta) }{\int_{0}^{M} g_{\theta}(i) d\dist(\theta) }\hspace{1em}  [\text{posterior mean}] \label{eqn:theta_i}
\end{align}

\noindent A particularly relevant subclass of mechanisms is one in which  $\Theta$ is partitioned into intervals and the parameters in each interval collectively map to a single interval-specific signal. Namely, we call a mechanism $\pi$ \textit{monotone partitional} if there exists a finite partition of $\Theta$, $\mathcal{P} \coloneqq \{\Theta_i\}_{i=1}^m = \{[t_{i-1},t_i]\}_{i=1}^m$  for some $m$ with $t_0 = 0, t_m = M$ and some increasing sequence $\{t_i\}_{i=1}^{m-1} \subseteq \Theta$ such that $\I = [m]$ and for all $\theta$, $g_\theta(i) = \mathbb{I}\{\theta \in [t_{i-1}, t_i]\}$. Observe from \eqref{eqn:q_i} and \eqref{eqn:theta_i} that for these mechanisms $q_i = F(t_i) - F(t_{i-1})$ and $\theta_i = \frac{\int_{t_{i-1}}^{t_i} \theta dF(\theta)}{q_i}$. This structure is important in optimal information provision for continuous parameters as these interval-based signalling mechanisms correspond to the vertices of the polytope containing all feasible signalling mechanisms over the parameter \cite{dworczak_simple_2019,guo_interval_2019,candogan_optimal_2021,ivanov_optimal_2015}.

\noindent Each instance of our game for a particular initial mass vector $\init$ and public belief $G$ over $\Theta$ can be expressed in normal form $\Gamma(\init, G) = \langle \K, \init,(\mathcal{S}_k)_{k \in [\K]},(u_k)_{k \in [\K]}\rangle$:

$\K, \init$: $\K$ groups of agents, $x_k$ agents in each group

$\mathcal{S}_k$: group $k$'s agents' strategies, $\mathcal{S}_k = \{\ws,\wrr\}, \forall k \in [\K]$ 

$u_k$: group $k$'s agents' utility functions $u_k(s, \fin; \theta)$ for 

strategy $s \in \mathcal{S}_k$, aggregate strategy $\fin$ and parameter $\theta$

\noindent To implement the mechanism we consider the notion of a (Nash) equilibrium where no agents have an incentive to deviate from where they decided to work. In particular, agents of group $k$ at an equilibrium $\fin^*$ should always be choosing an action that generates utility equal to the best alternative of working remotely or in-person, $\argmax_{\ell \in \{\ws,\wrr\}}  \E[u_k(\ell, \fin^*; \theta)]$. In equilibrium, a group's agents spread across both actions if and only if this group's agents are indifferent between $\ws$ and $\wrr$. The following definition exploits this insight:

\begin{defn}
\label{defn:pse}
The remote mass vector $\fin^*$ is a Nash equilibrium of our game $\Gamma(\init,G)$ if and only if both:
\begin{itemize}
    \item[(i)] $\E_{\theta \sim G}[u_k(\ws, \fin^*; \theta)] > 0 \implies y_{k}^*(i) =  0$
    \item[(ii)] $\E_{\theta \sim G}[u_k(\ws, \fin^*; \theta)] < 0 \implies y_{k}^*(i) =  x_k$
\end{itemize} 
\end{defn}

\noindent Once signal $i \in \I$ is publicly revealed with the mechanism $\pi$, a signal-specific public belief $F_i$ over the risk parameter (computed using $\eqref{eqn:F_i}$) is induced and we denote the equilibrium remote mass that agents choose in response by $\final(i) \in \R^{\K}_+$.

\noindent In our model, the planner seeks to maximize the probability that the equilibrium that is achieved by the agents belongs to a set of desirable outcomes $\goal(\theta)  \subset \R^\K$ that depends on the true parameter $\theta$. The significance of this set for which we seek to maximize its attainability can arise out of practically relevant notions like capping the stochastic infection cost (i.e. the set $\goal(\theta) \coloneqq \{\fin: \beta(\theta,\fin) \leq \epsilon\}$) or maximizing total social surplus (i.e. the set $\goal(\theta) \coloneqq \{\fin: \sum_{i=1}^{\K}v_i(x_i-y_i) - (1-\norm{\fin})\beta(\theta,\fin) \geq \epsilon\}$). 

\noindent We focus explicitly on two classes of objectives. The first is what we denote as \textit{stateless} objectives where $\goal(\theta)$ does not depend on $\theta$. Hence, the objective set can be represented by a set $\goal$ independent of $\theta$. The second is a \textit{stateful} objective wherein we restrict the shapes of the objective sets to be nested as $\theta$ increases. Particularly, we focus on a \textit{scaled capacity} objective where $\goal(\theta) = \{\fin: \|\fin\| \geq b(\theta)\}$ for some increasing function $b(\cdot)$. From a practical viewpoint, the stateful objective mimics capacity restrictions that are imposed on the shared workspace which are gradually scaled back as the infectiousness decreases. 

\noindent We can express the (expected) objective $V(\pi)$ of the planner using $\pi$ by:
\begin{align}
    V(\pi) = \mathbb{P}\{\final(i) \in \goal(\theta)\} \label{eqn:objective_stateful}
\end{align}
For a stateless objective parameterized by $\goal$ this corresponds to $V(\pi) = \mathbb{P}\{\final(i) \in \goal\}$. 

\noindent The planner seeks to design $\pi^*$ so that: 
\begin{align}
    \pi^* &\in \argmax_{\pi: \langle \{g_\theta\}_{\theta \in \Theta},\I\rangle} V(\pi)
    \label{eqn:opt_v}
\end{align}

\section{Stateless Information Design}
\label{sec:static_id}
\noindent In this section, we focus on designing $\pi^*$ for a \textit{stateless} planner objective $\goal$. In Sec. \ref{subsec:static_eqn_char}, we compute the exact form of the equilibrium $\final(i)$ as a function of agent's posterior mean beliefs $\theta_i$. We then derive the structure of the optimal mechanism used by the planner in Sec. \ref{subsec:max_static}.

\subsection{Equilibrium Characterization}
\noindent Recall from \eqref{eqn:utility_fxn} that agents' utilities have a linear dependence on $\theta$. Since agents are risk-neutral, it is clear from Definition \ref{defn:pse} that the equilibrium outcomes $\final(i)$ should only vary as a function of the posterior means $\theta_i$. Observe that agents that work in-person incur the same expected costs but may derive different benefits, so the agents deriving the highest value from working in-person should do so in an equilibrium. The following proposition formalizes this insight and characterizes the mass vector $\final(i)$ and total in-person mass $m(\theta_i) \coloneqq 1-\norm{\final(i)}$ in equilibrium. 
\label{subsec:static_eqn_char}
\begin{prop}
\label{prop:eqbm_formula_prop_1}
Let $s_j = \sum_{i=1}^{j} x_i$ for all $j \in \{0,1,..,\K\}$, and $v(u) = v_j$ if $s_{j-1} \leq u < s_j$ for all $u \in [0,1]$. For any signal $i \in \I$ under mechanism $\pi$, the equilibrium in-person mass ${m(\theta_i)  = \sup\{u: v(u) \geq c_1(1-u)\theta_i+c_2(1-u)\}}$ and:
\begin{align*}
    (\final(i))_k = \begin{cases}
0 & \text{ if } s_k \leq m(\theta_i) \\
m(\theta_i) - s_{k-1} & \text{ if } s_{k-1} < m(\theta_i) < s_k \\
x_k & \text{ if } s_{k-1} \geq m(\theta_i)
\end{cases}
\end{align*}
\end{prop}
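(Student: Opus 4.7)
The plan is to reduce the equilibrium to a single scalar, the total in-person mass $m := 1 - \norm{\fin}$, and then pin down $m$ via the stated supremum condition. The argument naturally splits into three steps: (i) use linearity of $u_k$ in $\theta$ to collapse the posterior $F_i$ to its mean $\theta_i$; (ii) use the strict ordering $v_1 > \cdots > v_{\K}$ to force any equilibrium to have an ordered threshold structure across groups; and (iii) identify the threshold via a one-dimensional sup argument.

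For step (i), since agents are risk-neutral and $u_k(\ws,\fin;\theta)$ is affine in $\theta$, the expected in-person payoff reduces to $\E_{\theta\sim F_i}[u_k(\ws,\fin;\theta)] = v_k - \theta_i c_1(\norm{\fin}) - c_2(\norm{\fin})$, which depends on $F_i$ only through $\theta_i$. For step (ii), the cost term is common to all groups, so Definition~\ref{defn:pse} immediately forces the following monotonicity: if $y_k^*(i) > 0$ then $\E[u_k(\ws,\fin^*;\theta)] \leq 0$, hence $\E[u_{k+1}(\ws,\fin^*;\theta)] < 0$ by $v_{k+1} < v_k$, so $y_{k+1}^*(i) = x_{k+1}$; iterating and arguing symmetrically from the other direction, any equilibrium has the form ``groups $1,\dots,j-1$ fully in-person, group $j$ possibly split, groups $j+1,\dots,\K$ fully remote'' for some threshold group $j$. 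The whole vector $\final(i)$ is thus determined by $m^*(i) := 1 - \norm{\final(i)}$, and the three cases in the stated formula for $(\final(i))_k$ correspond exactly to whether $s_k$ lies below, across, or above $m^*(i)$.

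Step (iii) identifies $m^*(i)$ with $m(\theta_i)$. Define $\phi(u) := v(u) - c_1(1-u)\theta_i - c_2(1-u)$; here $v$ is a non-increasing right-continuous step function with downward jumps at each $s_j$, while $c_1(1-u)\theta_i + c_2(1-u)$ is continuous and non-decreasing in $u$ (as $c_1,c_2$ are non-increasing). Hence $\phi$ is non-increasing, and since $\phi(0) = v_1 > 0$, the upper level set $\{u : \phi(u) \geq 0\}$ is a non-empty initial segment of $[0,1]$, so $m(\theta_i) = \sup\{u : \phi(u) \geq 0\}$ is well-posed. I would then split into two subcases. If $m(\theta_i) \in (s_{j-1}, s_j)$, continuity of $\phi$ on this constant piece of $v$ forces $\phi(m(\theta_i)) = 0$, which is precisely the indifference condition making group $j$ willing to split in equilibrium, giving the split-group case. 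If $m(\theta_i) = s_j$, the left limit yields $v_j \geq c_1(1-s_j)\theta_i + c_2(1-s_j)$ so group $j$ weakly prefers in-person, while the fact that no $u > s_j$ lies in the set forces $v_{j+1} < c_1(1-s_j)\theta_i + c_2(1-s_j)$ so group $j+1$ strictly prefers remote, yielding the boundary equilibrium.

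The main obstacle is the jump discontinuity of $v$ at each $s_j$: the step function is right-continuous but not left-continuous, so the sup may fail to be attained and the two regimes above must be carefully matched to the ``interior split'' and ``boundary'' equilibria; care is also needed to rule out exotic configurations with splits in two different groups simultaneously, which is exactly what step (ii) precludes. Uniqueness of the equilibrium then falls out as a byproduct of the strict monotonicity of $\phi$ (strict once $\theta_i > 0$ and $c_1$ is strictly decreasing).
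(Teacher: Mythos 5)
Your proposal is correct and follows essentially the same route as the paper's proof: the threshold structure across groups in your step (ii) is the paper's Lemma \ref{lemma:rectangular_curve}, and your step (iii)'s interior-split versus boundary case analysis for the supremum is exactly the paper's Lemma \ref{lemma:compute_inperson_mass}. The only differences are cosmetic --- you make the reduction to posterior means explicit (the paper handles it in the surrounding text) and you are slightly more careful about the strictness of the monotonicity of the cost term.
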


\noindent Intuitively, the proposition sets the in-person mass $m(\theta_i)$ by having as many agents work in-person as possible until the remaining agents find the infectious cost too high. The non-increasing function $v(u)$ corresponds to the highest benefit amongst the mass of $1-u$ agents with the smallest benefits $v_i$. The computation merely increases $u$ until $v(u)$ no longer exceeds the expected infectious costs.

\begin{figure}[h!]
\centering
\includegraphics[width=45mm]{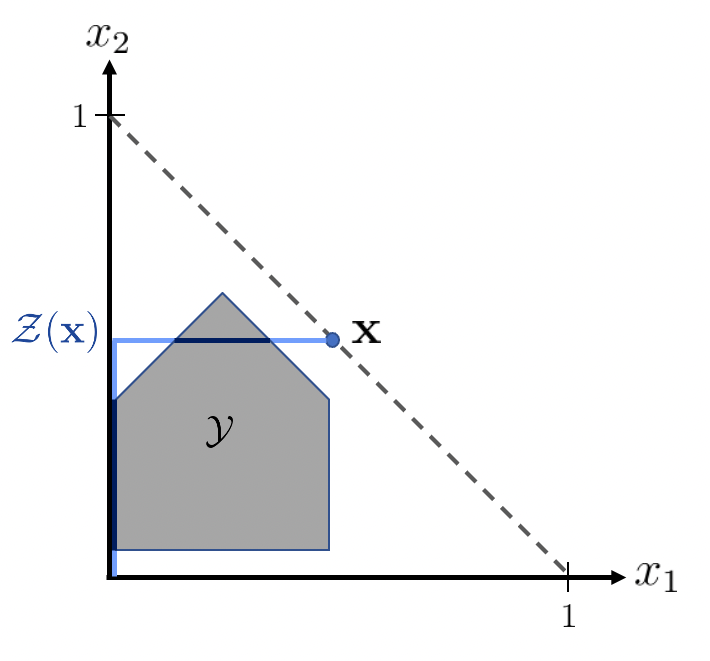}
\caption{Example of $\mathcal{Z}(\init)$ (solid curve in blue) when $\K = 2$.}
\label{fig:rec_curve_ex}
\end{figure}
\noindent Proposition $\ref{prop:eqbm_formula_prop_1}$ also identifies a threshold-based structure that the equilibrium must obey. Specifically, we can define a critical group $k^*(i) = \inf\{k: s_k > m(\theta_i)\}$ where agents in groups $k$ for $k < k^*(i)$ all work in-person and agents in groups $k$ for $k > k^*(i)$ all work remotely. This characteristic determines a strict subset of all possible remote mass vectors $\fin$ that are achievable in equilibrium. Hence, from Proposition \ref{prop:eqbm_formula_prop_1}, we can conclude that for any $\pi$, any equilibrium $\final(i)$ must lie on a one-dimensional manifold ${\curve \coloneqq \{c\mathbf{e}_k + \sum_{i=k+1}^\K x_i \mathbf{e}_i| k \in [\K], c \in [0,x_k]\}}$ where $\mathbf{e}_i \in \R^\K$ with $(\mathbf{e}_i)_k = 1$ if $i=k$ and $0$ otherwise; see Fig. \ref{fig:rec_curve_ex}.

\noindent Moreover, as the posterior mean belief of the infectiousness $\theta_i$ increases, agents are more cautious to choose in-person work because the infectious cost increases on expectation. The following lemma establishes that the equilibrium in-person mass is monotone and smooth in $\theta_i$. 

\begin{lemma}
$m(\theta)$ is non-increasing, bounded and continuous in $\theta$.
\label{lemma:m_smooth}
\end{lemma}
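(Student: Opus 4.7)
The plan is to verify monotonicity and boundedness with brief direct arguments and devote the bulk of the effort to continuity.

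For monotonicity, let $\phi(u,\theta) := c_1(1-u)\theta + c_2(1-u)$ and observe that $\phi$ is non-decreasing in $\theta$ since $c_1 \geq 0$. Hence for $\theta_1 < \theta_2$, $\{u : v(u) \geq \phi(u,\theta_2)\} \subseteq \{u : v(u) \geq \phi(u,\theta_1)\}$, giving $m(\theta_1) \geq m(\theta_2)$. Boundedness is immediate: $m(\theta) \in [0,1]$, with $m(\theta) \geq 0$ because $u = 0$ always lies in the defining set ($v(0) = v_1 > 0 = \phi(0,\theta)$).

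For continuity, fix $\theta^* \in \Theta$, set $m^* = m(\theta^*)$, and define $h(u,\theta) := v(u) - \phi(u,\theta)$. The function $h(\cdot,\theta)$ is non-increasing and right-continuous in $u$, so the set $S(\theta) := \{u : h(u,\theta) \geq 0\}$ is a down-set of the form $[0, m(\theta)]$ or $[0, m(\theta))$; moreover $h(u,\cdot)$ is continuous and non-increasing in $\theta$. Monotonicity of $m$ guarantees one-sided limits at $\theta^*$ exist, so it suffices to show both equal $m^*$. Left continuity follows easily: if $L = \lim_{\theta \uparrow \theta^*} m(\theta) > m^*$, pick $u \in (m^*, L)$; by definition of $m^*$ and monotonicity of $h$ in $u$, $h(u,\theta^*) < 0$, and by continuity of $h$ in $\theta$, $h(u,\theta) < 0$ for $\theta$ near $\theta^*$, so $m(\theta) \leq u$, contradicting $m(\theta) \to L$.

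Right continuity is the main obstacle. Let $L = \lim_{\theta \downarrow \theta^*} m(\theta) \leq m^*$ and suppose $L < m^*$. The crux is to find $u' \in (L, m^*]$ with $h(u',\theta^*) > 0$ \emph{strictly}, since then continuity of $h$ in $\theta$ propagates strict positivity to all $\theta$ near $\theta^*$, giving $m(\theta) \geq u' > L$ — a contradiction. To locate $u'$, pick any $u \in (L, m^*)$ with $v(u) = v_j$ (so $u \in [s_{j-1}, s_j)$); note $h(u,\theta^*) \geq 0$ since $u < m^*$. If this inequality is strict, take $u' = u$. Otherwise $\phi(u,\theta^*) = v_j$; choosing $u'' \in (L, u)$ close to $u$, strict monotonicity of $\phi$ in $u$ (valid whenever $\theta^* > 0$ since $c_1$ is strictly decreasing) forces $\phi(u'',\theta^*) < v_j$, and whether $u''$ remains in step $j$ (so $v(u'') = v_j$) or drops into step $j-1$ (so $v(u'') = v_{j-1} > v_j$), we conclude $h(u'',\theta^*) > 0$ and take $u' = u''$. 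The boundary case $\theta^* = 0$ with $c_2$ locally flat at the crossing would require $\phi$ to be strictly increasing in $u$ at that point; generically this holds, and the argument then proceeds identically.
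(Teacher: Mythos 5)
Your monotonicity and boundedness arguments coincide in substance with the paper's: the paper packages monotonicity through the decreasing function $f(u) = \frac{v(u)-c_2(1-u)}{c_1(1-u)}$ and the nesting of its superlevel sets, which is the same observation as your nesting of $\{u: v(u)\geq c_1(1-u)\theta+c_2(1-u)\}$ in $\theta$. Where you genuinely diverge is continuity. The paper disposes of it in one line by invoking Berge's Maximum Principle; you instead give a direct one-sided-limit argument, and your route is arguably the more defensible of the two, since Berge's theorem presupposes a continuous objective while $v(u)$ is a step function, so the paper's citation is not immediate. Your argument also isolates the real mechanism: continuity at $\theta^*$ hinges on strict monotonicity of $u \mapsto c_1(1-u)\theta^* + c_2(1-u)$ at the crossing point, which holds for every $\theta^* > 0$ because $c_1(1-u)$ is strictly increasing in $u$. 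The one soft spot is exactly the one you flag and then wave away as ``generic'': at $\theta^* = 0$, if $c_2(1-u)$ is constant and equal to some $v_j$ on a subinterval of $[s_{j-1},s_j)$, then $m(0)$ is the right end of that flat stretch while $\lim_{\theta \downarrow 0} m(\theta)$ is its left end, so $m$ genuinely jumps at $0$; equivalently, $f$ above can be flat at the value $0$. So ``generically this holds'' does not establish the lemma on all of $[0,M]$ as stated --- though the paper's own appeal to Berge does not cover this boundary configuration either, and the issue is moot for any posterior mean bounded away from $0$. Net assessment: your proof is correct and more careful than the paper's for $\theta > 0$, reaches the result by a more elementary route, and honestly exposes an edge case at $\theta=0$ that the paper silently skips.
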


\noindent Using Proposition \ref{prop:eqbm_formula_prop_1} and Lemma \ref{lemma:m_smooth}, we conclude that as the posterior mean $\theta_i$ increases, the equilibrium mass vector moves further from the initial point $\init$ and closer towards the origin along $\curve$. As a result, the design of the optimal signalling mechanism can be viewed as the problem of \emph{maximizing the probability} that \emph{posterior means} generated by the mechanism correspond to points $\fin \in \curve$ that the \emph{planner deems desirable} (i.e. $\fin \in \goal$).

\subsection{Optimal Information Design}
\label{subsec:max_static}
\noindent We now use Proposition \ref{prop:eqbm_formula_prop_1} to solve the planner's problem of choosing an optimal signalling mechanism $\pi$. The manifold $\curve$ contains all possible equilibria $\final(i)$ that could possibly be achieved for any $\pi$ and any realized signal $i$. However, the manifold need not intersect with the goal set $\goal$ at all, and in this case, no $\pi$ can induce an equilibrium $\final(i) \in \goal$. Thus, we need to exploit the geometric relationship between $\goal$ and $\curve$ to solve this design problem. 

\noindent More generally, in relation to well-behaved sets $\goal$, our next result shows that the equilibria in $\curve \cap \goal$ corresponds to a finite number of intervals of in-person mass. Specifically, we make the following assumption: \\
\noindent \textbf{(A1)} $\goal$ is closed and convex. \\
Moreover, we denote $k(\init,u) = \min\{j > 0: s_j > u\}$ as the lowest group that has not completely chosen in-person work after the mass of $u$ agents with highest benefit choose in-person work. Likewise, $z(\init,u)$ corresponds to the point on $\curve$ after $u$ agents choose in-person work or equivalently $z(\init, u) = (u-\sum_{k = 1}^{k(\init,u)-1} x_k)\mathbf{e}_{k(\init,u)} +  \sum_{k = k(\init,u)+1}^{\K} x_k \mathbf{e}_k$. Proposition \ref{prop:eqbm_formula_prop_1} implies that $z(\init, 1-\norm{\final(i)}) = \final(i)$ for all $i$. 
\begin{lemma}
Under Assumption A1, there exists $\tilde{\K} \leq \K$ closed and disjoint intervals $\Omega_i = [\omega_i^1, \omega_i^2] \subset [0,1]$ for all $i \in [\tilde{\K}]$ with $\goal \cap \curve = \cup_{i=1}^{\tilde{\K}} \{z(\init,u): u \in \Omega_i\}$.
\label{lemma:intervals}
\end{lemma}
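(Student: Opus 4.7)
\noindent The plan is to pull $\goal \cap \curve$ back through the parameterization $u \mapsto z(\init, u)$ and then invoke convexity on each affine piece. The manifold $\curve$ is a finite union of $\K$ line segments $L_k = \{c\mathbf{e}_k + \sum_{i=k+1}^{\K} x_i \mathbf{e}_i : c \in [0, x_k]\}$ that meet only at their endpoints, and by inspection of its definition, $z(\init, \cdot)$ restricted to each subinterval $[s_{k-1}, s_k]$ is continuous and affine, mapping this subinterval onto $L_k$. Consequently $z(\init, \cdot) : [0,1] \to \curve$ is a continuous piecewise affine surjection whose breakpoints are $s_0 < s_1 < \cdots < s_{\K}$.

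\noindent Under Assumption A1, $\goal$ is closed and convex, so its preimage under the affine restriction $z(\init, \cdot)|_{[s_{k-1}, s_k]}$ is itself closed and convex in $\R$. Every closed convex subset of $\R$ is either empty, a single point, or a closed interval, so I obtain at most one closed interval $I_k \subseteq [s_{k-1}, s_k]$ for each $k \in [\K]$ with $z(\init, I_k) = \goal \cap L_k$. Taking the union over all $k$ gives $\{u \in [0,1] : z(\init, u) \in \goal\} = \bigcup_{k=1}^{\K} I_k$, a collection of at most $\K$ closed intervals whose image under $z(\init, \cdot)$ equals $\goal \cap \curve$.

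\noindent The remaining step is to ensure disjointness: the only possible overlap between $I_k$ and $I_{k+1}$ is at the common breakpoint $s_k$, which occurs precisely when $z(\init, s_k) \in \goal$. In that case I merge $I_k$ and $I_{k+1}$ into a single closed interval; after performing all such merges, the resulting collection $\{\Omega_i\}_{i=1}^{\tilde{\K}}$ is pairwise disjoint and closed with $\tilde{\K} \leq \K$, and $\goal \cap \curve = \bigcup_{i=1}^{\tilde{\K}} \{z(\init, u) : u \in \Omega_i\}$ as claimed. The main obstacle is bookkeeping at the breakpoints $s_k$ to ensure the merged intervals remain well-defined and pairwise disjoint; the substantive content — that preimages of closed convex sets under affine maps are closed intervals in $\R$ — is standard.
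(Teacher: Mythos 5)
Your proof is correct and follows essentially the same route as the paper's: decompose $\curve$ into the $\K$ line segments indexed by the groups, note that the intersection of each segment with the closed convex set $\goal$ is a (possibly empty) closed subinterval in the parameter $u$, and then merge adjacent intervals that touch at the breakpoints $s_k$ to obtain the disjoint collection $\{\Omega_i\}_{i=1}^{\tilde{\K}}$. The only cosmetic difference is that you phrase the key step as pulling $\goal$ back through the affine parameterization $u \mapsto z(\init,u)$ rather than intersecting line segments with $\goal$ directly in $\R^{\K}$; the substance is identical.
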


\noindent Observe that if $\goal \cap \{z(\init,u): s_{k-1} \leq u \leq s_k\}$ is empty, then $\omega_k^1 > \omega_k^2$. Trivially, notice that if $\goal$ is contained in the interior of the positive orthant of $\R^\K$, then there is at most only one non-empty segment of intersection $\Omega_1$ since $\{z(\init,u): s_{k-1} \leq u \leq s_k\} \cap \text{int}(\R_+^\K) = \emptyset$ for all $k > 1$. We note that Lemma \ref{lemma:intervals} is generalizable beyond Assumption A1. Namely, if $\goal$ is nonconvex, our results are valid as long as the $\goal \cap \curve$ is able to be represented by a finite number of intersecting line segments $\Omega_i$.

\noindent Using Lemma \ref{lemma:intervals}, we can now re-express \eqref{eqn:opt_v} to instead depend on whether the remote masses lie in one of the intervals $\Omega_i$:
\begin{align}
        V^* &= \max_{\pi: \langle \{g_\theta\}_{\theta \in \Theta},\I\rangle} \mathbb{P}\{\final(i) \in \goal\} \nonumber\\
         &= \max_{\pi: \langle \{g_\theta\}_{\theta \in \Theta},\I\rangle} \mathbb{P}\{\final(i) \in \goal \cap \curve\} \nonumber\\
         &= \max_{\pi: \langle \{g_\theta\}_{\theta \in \Theta},\I\rangle} \sum_{k=1}^{\KK} \mathbb{P}\{1-\norm{\final(i)} \in \Omega_i\} \label{eqn:value_c_i} \nonumber\\       
         &= \max_{\pi: \langle \{g_\theta\}_{\theta \in \Theta},\I\rangle} \sum_{k=1}^{\KK} \mathbb{P}\{\omega_k^1 \leq m(\theta_i) \leq \omega_k^2\}           
\end{align}
From Lemma $\ref{lemma:m_smooth}$, we note that since $m$ is monotone, bounded and continuous, the preimage of this mapping over the closed interval $[\omega_k^1,\omega_k^2]$ is another closed interval $[\lol{k}, \hil{k}]$ where $m(\theta_i) \in \Omega_i$ if and only if $\theta_i \in [\lol{k}, \hil{k}]$. Without loss of generality, since the intervals $\Omega_i$ are disjoint, we let $\lol{k}$ be increasing in $k$ which also implies that $\hil{k}$ is increasing in $k$. Exploiting this fact and using a mechanism $\pi$'s corresponding direct mechanism $\mathcal{T}_\pi$ as computed in \eqref{eqn:q_i} and \eqref{eqn:theta_i}, we can express the planner's program as:
\begin{align}
        V^* &= \max_{\mathcal{T}_\pi: \{(q_i, \theta_i)\}_{i \in \I}} \sum_{k=1}^{\KK} q_i \mathbb{I}\{\lol{k} \leq \theta_i \leq \hil{k}\}           
        \label{eqn:red_1_form}
\end{align}

\noindent We make some immediate observations about this result. Firstly, observe that if the objective set $\goal$ does not intersect with the manifold $\curve$, then all the $\Omega_i$ are empty and hence the objective in \eqref{eqn:value_c_i} is necessarily equal to 0. By contrast, if $\E_{\theta \sim \dist}[\theta] \coloneqq \mean \in [\lol{l},\hil{l}]$ for some $l$, observe that a non-informative mechanism $\pi$ has the property that $\I$ is a singleton and $\mathcal{T}_\pi = \{(1,\mean)\}$. Hence, the planner achieves the maximal objective of $1$ with the non-informative mechanism. However, if $\mean \notin [\lol{l},\hil{l}]$ for any $l$, then the non-informative mechanism achieves an objective of 0. Consequently, we find it useful to categorize distributions $\dist$ based on the positioning of the prior mean $\mean$ relative to the intervals $\{[\lol{k},\hil{k}]\}_{k \in [\KK]}$ and solve for the optimal mechanism in each regime. In order to characterize these regimes, consider the following increasing functions $\ubar{s}(t) = \E[\theta|\theta \leq t]$ and $\bar{s}(t) = \E[\theta|\theta \geq t]$ which are computable from $\dist$. Then, we can define $\delta(\alpha,\lambda, t) \coloneqq \frac{ \ubar{s}(t) \lambda F(t) +  \bar{s}(t) \alpha (1-F(t))}{\lambda F(t) +  \alpha (1-F(t))}$ to represent the posterior mean belief when receiving a signal that is generated with probability $\lambda$ for $\theta \leq t$ and with probability $\alpha$ for $\theta > t$. We now introduce the following regimes and subregimes on the prior distribution $\dist$:
\begin{align*}
&\text{(R1): } \exists k \text{ such that } \mean \in [\lol{k},\hil{k}] \\
&\text{(R2): } \exists k \text{ such that } \mean \in (\hil{k},\lol{k+1}) \\
&\hspace{5mm}\text{(R2a): } \exists \ell, k \in [\KK], t \in \Theta, \alpha, \lambda \in [0,1] \\ &\quad \quad \quad \quad \text{ s.t. } \delta(\alpha,\lambda, t) \in [\lol{k},\hil{k}] \\ &\quad \quad \quad \quad \text{ and } \delta(1-\alpha,1-\lambda, t) \in [\lol{\ell},\hil{\ell}]\\ 
&\text{(R3): } \mean > \hil{\KK} \\
&\text{(R4): } \mean < \lol{1}
\end{align*}

\noindent The above regimes are illustrated in Fig. \ref{fig:result}. (R1) captures the aforementioned case where the prior distribution's mean $\mu$ already lies in an interval. Both (R3) and (R4) capture distributions where $\mu$ lies outside the convex hull of the intervals. (R2) considers when $\mu$ does not lie in any interval, but falls in the gap between two intervals. Intuitively, in this case the planner seeks to spread mean beliefs outwards from the gap. However, this information design may not be feasible for distributions $\dist$ that are tightly concentrated in this gap and $\delta(\cdot,\cdot,\cdot)$ will not take on many values outside this gap. Consequently, we focus on a more likely case -- subregime (R2a) -- where $\dist$ has sufficient probability mass outside the gap between two intervals surrounding $\mean$.

\noindent Before solving for an optimal mechanism in each regime, note that any direct mechanism $\mathcal{T}_\pi$ that is a solution to \eqref{eqn:red_1_form} is not unique in general. This is immediate as any signal $i \in \I$ can be forked into two signals $i_1, i_2$ uniformly at random and induce symmetric posterior means $\theta_i = \theta_{i_1} = \theta_{i_2}$ with probability $\frac{q_i}{2}$ each and hence achieve the same objective $V^*$. To eliminate such multiplicity, we provide the following lemma: it shows that we need not consider mechanisms that use a set of signals $\I$ of size $|\I| > \KK +1$ since we require at most one signal to correspond to a posterior mean in each of the $\KK$ intervals and one extra signal whose posterior mean lies outside these sets. 
\begin{lemma}
\label{lemma:num_signals_bdd}
There exists an optimal mechanism $\pi^* = \langle \{g^*_\theta\}_{\theta \in \Theta}, [\KK+1] \rangle$ where $V(\pi^*) = V^*$ such that for each $k \in [\KK]$, $\theta_k \in [\lol{k}, \hil{k}]$ and $\theta_{\KK+1} \notin [\lol{k}, \hil{k}]$ for all $k \in [\KK]$.
\end{lemma}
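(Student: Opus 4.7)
The plan is to take any optimal mechanism and coarsen it by grouping signals whose posterior mean lies in a common target interval $[\lol{k},\hil{k}]$ into a single merged signal, and bundling all remaining signals into one residual signal. Because the objective in \eqref{eqn:red_1_form} depends only on which interval each $\theta_i$ lies in, such coarsening cannot decrease the value of the objective.

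I first fix any optimal $\pi$ with direct representation $\mathcal{T}_\pi = \{(q_i,\theta_i)\}_{i \in \mathcal{I}}$ and $V(\pi) = V^*$. Since the intervals $\{[\lol{k},\hil{k}]\}_{k=1}^{\KK}$ are pairwise disjoint (by the indexing convention adopted after \eqref{eqn:red_1_form}), I partition $\mathcal{I}$ into $\KK+1$ classes: $J_k = \{i \in \mathcal{I} : \theta_i \in [\lol{k},\hil{k}]\}$ for $k \in [\KK]$, and $J_{\KK+1} = \mathcal{I} \setminus \bigcup_{k=1}^{\KK} J_k$. I then define $\pi^* = \langle \{g^*_\theta\}_{\theta \in \Theta}, [\KK+1] \rangle$ by $g^*_\theta(k) = \sum_{i \in J_k} g_\theta(i)$. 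This is a valid mechanism because $\sum_{k=1}^{\KK+1} g^*_\theta(k) = \sum_{i \in \mathcal{I}} g_\theta(i) = 1$ for every $\theta$. By \eqref{eqn:q_i}--\eqref{eqn:theta_i}, its direct representation has $\tilde{q}_k = \sum_{i \in J_k} q_i$ and, whenever $\tilde{q}_k > 0$, $\tilde{\theta}_k = \tilde{q}_k^{-1} \sum_{i \in J_k} q_i \theta_i$, a convex combination of the $\theta_i$ belonging to class $J_k$.

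For each $k \in [\KK]$ with $\tilde{q}_k > 0$, this convex combination lies in the convex set $[\lol{k},\hil{k}]$, so $\tilde{\theta}_k \in [\lol{k},\hil{k}]$; if $\tilde{q}_k = 0$, I simply declare $\tilde{\theta}_k$ to be any point of $[\lol{k},\hil{k}]$, since the signal never fires. Each of the first $\KK$ signals therefore contributes $\tilde{q}_k$ to \eqref{eqn:red_1_form}, giving $V(\pi^*) \geq \sum_{k=1}^{\KK} \tilde{q}_k = V^*$. Now suppose toward contradiction that $\tilde{\theta}_{\KK+1} \in [\lol{\ell},\hil{\ell}]$ for some $\ell$ while $\tilde{q}_{\KK+1} > 0$; then $V(\pi^*) \geq V^* + \tilde{q}_{\KK+1} > V^*$, contradicting optimality of $V^*$. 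Hence either $\tilde{\theta}_{\KK+1} \notin \bigcup_k [\lol{k},\hil{k}]$ automatically, or $\tilde{q}_{\KK+1} = 0$, in which case I reassign $\tilde{\theta}_{\KK+1}$ to any point of $\Theta \setminus \bigcup_k [\lol{k},\hil{k}]$ (this complement is nonempty except in the trivial case $V^* = 1$, in which any assignment makes the statement vacuous since the residual signal is never generated).

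The main obstacle is the final dichotomy --- confirming the residual signal's posterior mean sits outside every good interval --- but this is forced by the optimality of $\pi$ through the contradiction above. The remainder is bookkeeping: coarsening signals is always feasible, and the convexity of each interval $[\lol{k},\hil{k}]$ guarantees that merging within an interval preserves the interval membership of the posterior mean, so the objective value is preserved exactly.
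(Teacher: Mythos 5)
Your proof is correct. The paper itself gives no argument for this lemma --- it simply defers to Lemma 1 of an external reference (Candogan 2019) --- so your write-up is a genuinely self-contained substitute rather than a restatement of the paper's proof. The core of your argument is the standard signal-pooling step: merging all signals whose posterior means fall in a common interval $[\lol{k},\hil{k}]$ yields a new signal whose probability is the sum $\tilde{q}_k=\sum_{i\in J_k}q_i$ and whose posterior mean is the $q$-weighted average of the merged means, hence stays in the (convex) interval; this follows directly from \eqref{eqn:q_i}--\eqref{eqn:theta_i} exactly as you compute. The contradiction argument for the residual signal is also sound: if its mean landed in some $[\lol{\ell},\hil{\ell}]$ with positive probability, the coarsened mechanism would strictly exceed $V^*$. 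Your handling of the degenerate cases (zero-probability signals, and the possibility that $\Theta\setminus\bigcup_k[\lol{k},\hil{k}]$ is empty) is more careful than the lemma statement itself demands; the only cosmetic looseness is that a never-fired signal has no well-defined posterior mean, so "declaring" one is a convention rather than a derivation, but this is exactly the reading the lemma requires. What your approach buys over the paper's is transparency --- the reader sees that the bound $\KK+1$ comes purely from the disjointness and convexity of the target intervals plus the fact that the objective \eqref{eqn:red_1_form} is a sum of interval-membership indicators --- at the cost of a few lines that the paper outsources to the citation.
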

The insight given by Lemma \ref{lemma:num_signals_bdd} consequently reduces the search over all possible mechanisms $\pi$ to those that use at most $\KK+1$ signals and can be represented by a $\mathcal{T}_\pi$ with at most $\KK+1$ tuples. Using this reduction, we can without loss of optimality restrict the problem \eqref{eqn:red_1_form} to simpler signalling mechanisms $\mathcal{T}_\pi$ with less than $\KK+1$ tuples. Notice that if we had simply specified $\{(q_i, \theta_i)\}_{i \in [\KK+1]}$ instead of deriving it from a particular $\pi$, we might be optimizing over tuples that cannot actually be implemented by a mechanism $\pi$. To ensure implementability, we require that the distribution over the posterior means $G$ specified by the tuple with cumulative distribution $G(t) = \sum_{i=1}^{\KK+1} q_i \mathbb{I}\{\theta_i \leq t\}$ is a mean-preserving contraction of $\dist$. The equivalence between the distributions $G$ that are \textit{mean-preserving contraction} of $\dist$ and the distributions over posterior means that are implementable through a mechanism has been well-established \cite{blackwell_theory_1954, gentzkow_rothschild-stiglitz_2016}. Formally, $G \preccurlyeq \dist$ is a mean-preserving contraction if and only if $\forall t \in \Theta$, $\int_{0}^{t} F(t)dt \leq \int_{0}^{t} G(t)dt$.

\noindent We now exploit the above reduction and solve for the optimal mechanism that solves $\eqref{eqn:red_1_form}$. Define the increasing function $h(\theta) \coloneqq \sup\{s:\int_0^s F^{-1}(t)dt \leq s \theta\}$. The following theorem characterizes the optimal objective and optimal mechanism across R1, R2a, R3 and R4 (we discuss the optimal direct signalling mechanism for R2 in the appendix).

\begin{theorem}
\label{thm:stateless}
The optimal objective $V^*$ and a corresponding optimal mechanism can be computed as follows for each regime:\\
(R1): 
$V^* = 1$ and $\pi^*$ is monotone partitional with $t_0 = 0$ and $t_1 = M$  \\
(R2a):
$V^* = 1$ and $\pi^*$ has $\I = \{1,2\}$,  $g_\theta(1) = \lambda \mathbb{I}\{\theta \in [0, t]\}+\alpha \mathbb{I}\{\theta \in (t, M]\}$ and $g_\theta(1) = (1-\lambda) \mathbb{I}\{\theta \in [0, t]\}+(1-\alpha) \mathbb{I}\{\theta \in (t, M]\}$.\\
(R3):
$V^* = q_1^* \coloneqq \min\{h(\hil{\KK}),\frac{M-\mu}{M-\hil{\KK}}\}$ and $\pi^*$ is monotone partitional with $t_0 = 0$, $t_1 = F^{-1}(q_1^*)$ and $t_2 = M$. \\
(R4):
$V^* = 1-q_2^*$ where $q_2^* \coloneqq \inf\{q \geq \frac{\lol{1}-\mu}{\lol{1}}:q \leq h(\lol{1}-\frac{\lol{1}-\mean}{q}) \}$ and $\pi^*$ is monotone partitional with $t_0 = 0$, $t_1 = F^{-1}(q_2^*)$ and $t_2 = M$.
\end{theorem}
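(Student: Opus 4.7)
The plan is to handle the four regimes separately. Throughout I invoke the reduced formulation in \eqref{eqn:red_1_form}, Lemma \ref{lemma:num_signals_bdd} to bound the number of signals, and the standard equivalence between implementable posterior-mean distributions and mean-preserving contractions (MPC) of $\dist$ \cite{gentzkow_rothschild-stiglitz_2016}. For each regime I exhibit an explicit mechanism achieving the claimed value and verify a matching upper bound on $V^*$.

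Regimes (R1) and (R2a) are dispatched by direct construction. In (R1), the non-informative (single-signal) mechanism places a unit atom at $\mean$, which by hypothesis lies in some $[\lol{k},\hil{k}]$, yielding $V = 1$ and trivially matching the upper bound $V \leq 1$. In (R2a), I use the parameters $t,\alpha,\lambda$ guaranteed by the regime's hypothesis to construct the claimed two-signal mechanism; applying \eqref{eqn:q_i}-\eqref{eqn:theta_i} shows the two posterior means equal $\delta(\alpha,\lambda,t)$ and $\delta(1-\alpha,1-\lambda,t)$, both of which lie in desirable intervals by assumption, so $V = 1$.

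Regime (R3) is the substantive case; (R4) is symmetric. The upper bound $V^* \leq q_1^*$ has two parts. First, Bayesian plausibility $\sum_i q_i \theta_i = \mean$, combined with $\theta_i \leq \hil{\KK}$ for every desirable signal and $\theta_i \leq M$ otherwise, gives $\mean \leq V \hil{\KK} + (1-V) M$, hence $V \leq \tfrac{M-\mean}{M-\hil{\KK}}$. Second, letting $G$ denote the induced posterior-mean distribution and $p = G(\hil{\KK}) \geq V$, the MPC inequality at the bottom $p$-quantile yields $\int_0^p F^{-1}(t)\,dt \leq \int_0^p G^{-1}(t)\,dt \leq p\hil{\KK}$, so by definition $p \leq h(\hil{\KK})$ and thus $V \leq h(\hil{\KK})$. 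For achievability I take the two-interval monotone partition with $t_1 = F^{-1}(q_1^*)$. In the typical branch where $q_1^* = h(\hil{\KK})$, the supremum definition of $h$ forces equality $\int_0^{q_1^*} F^{-1}(t)\,dt = q_1^* \hil{\KK}$, so $\theta_1 = \ubar{s}(t_1) = \hil{\KK} \in [\lol{\KK},\hil{\KK}]$, yielding $V = q_1^*$. For (R4), the symmetric argument uses $\theta_i \geq 0$ for the Bayes bound $q_2 \geq \tfrac{\lol{1}-\mean}{\lol{1}}$ and the MPC at the bottom-quantile signal to derive the implicit constraint $q_2 \leq h(\lol{1}-\tfrac{\lol{1}-\mean}{q_2})$ after substituting the tight $\theta_2$ value that saturates $\theta_1 = \lol{1}$; the infimum $q_2^*$ is achieved by the monotone partition with $t_1 = F^{-1}(q_2^*)$ at which the MPC binds.

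The main obstacle I anticipate is verifying that the candidate $\theta_1$ in (R3) (resp.\ $\theta_2$ in (R4)) lands in the correct target interval across \emph{both} branches of the minimum in $q_1^*$ (resp.\ infimum in $q_2^*$). Handling this cleanly requires combining the continuity and monotonicity of $\ubar{s}$ and $h$ with the Bayesian relation to trace $\theta_1$ back into $[\lol{\KK},\hil{\KK}]$, and noting that whenever the Bayes branch strictly binds, $\dist$ must place positive mass at $M$ (or $0$, for (R4)), in which case the monotone partition may need a randomized threshold so that $q_1 = q_1^*$ exactly. The $\min$/$\inf$ formulation precisely absorbs these edge cases.
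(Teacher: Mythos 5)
Your proposal is correct, and for the substantive regimes (R3)/(R4) it takes a genuinely different route from the paper's. The paper first reduces to a finite-dimensional program over $\KK+1$ tuples $(q_i,\theta_i)$ via Lemma \ref{lemma:num_signals_bdd}, then uses a mass-shifting perturbation (a convex stochastic modification that preserves the MPC property) to argue only the signal targeting $[\lol{\KK},\hil{\KK}]$ and one outside signal need positive probability, solves the resulting two-variable program to get $q_1^*=\min\{h(\hil{\KK}),\tfrac{M-\mean}{M-\hil{\KK}}\}$, and finally establishes the monotone-partitional implementation by a tangency/contradiction argument on the convex functions $f(x)=\int_0^xF(t)dt$ and $g(x)=\int_0^xG(t)dt$. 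You instead prove the upper bound directly for an arbitrary mechanism: Bayesian plausibility gives $V\le\tfrac{M-\mean}{M-\hil{\KK}}$, and a single second-order stochastic dominance inequality at the quantile $p=G(\hil{\KK})\ge V$ gives $V\le h(\hil{\KK})$; this bypasses both the reduction to finitely many signals and the perturbation step, and your achievability check is a direct computation of $\theta_1=\ubar{s}(F^{-1}(q_1^*))$ rather than the paper's tangency argument. What your approach buys is a cleaner, mechanism-free upper bound; what the paper's buys is an explicit optimization template that it reuses for regime (R2) and that parallels the LP of Proposition \ref{prop:stateful}.

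One remark closes the obstacle you flag at the end. The two branches of the minimum never genuinely diverge: since $\int_{h}^{1}\dist^{-1}(t)dt=\mean-\int_0^{h}\dist^{-1}(t)dt\ge\mean-h\hil{\KK}$ and $\dist^{-1}\le M$, one always has $h(\hil{\KK})(M-\hil{\KK})\le M-\mean$, i.e.\ $h(\hil{\KK})\le\tfrac{M-\mean}{M-\hil{\KK}}$, so $q_1^*=h(\hil{\KK})$ in all of (R3) (with equality of the two branches exactly when all prior mass above $F^{-1}(h(\hil{\KK}))$ sits at $M$). Moreover $h(\hil{\KK})<1$ in (R3) because $\mean>\hil{\KK}$, so the convex function $s\mapsto\int_0^s\dist^{-1}(t)dt-s\hil{\KK}$ vanishes at $s=h(\hil{\KK})$ and your computation $\theta_1=\hil{\KK}\in[\lol{\KK},\hil{\KK}]$ applies unconditionally; the only residual edge case is an atom of $\dist$ at the threshold (where $F(F^{-1}(q_1^*))\neq q_1^*$ and the partition must split the atom randomly), which you correctly note and which the paper's own tangency argument also elides.
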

 Notice that in subcase (R1), an optimal mechanism is trivially monotone partitional with a single signal; in Figure \ref{fig:result}, the entire probability mass over $\Theta$ maps to a single signal. In (R3) and (R4) an optimal mechanism is also monotone partitional with two signals, where a threshold is specified by the theorem and the $\Theta$ is partitioned into a ``low" and ``high" interval, where each interval corresponds to a signal. In (R2b), we create two signals and identify a threshold $t$ where we split $\Theta$ into two intervals at that threshold. In each interval, we then reveal the first signal with probability $\alpha$ if $\theta \leq t$ and with probability $\lambda$ if $\theta > t$. Otherwise, we reveal the second signal. Consequently, the signals are generated according to an interval-specific distribution over just two signals. The analysis of these cases show that simple signalling mechanisms that use at most two signals are sufficient to achieve the optimum. 

\begin{figure}
    \centering
\includegraphics[width=80mm]{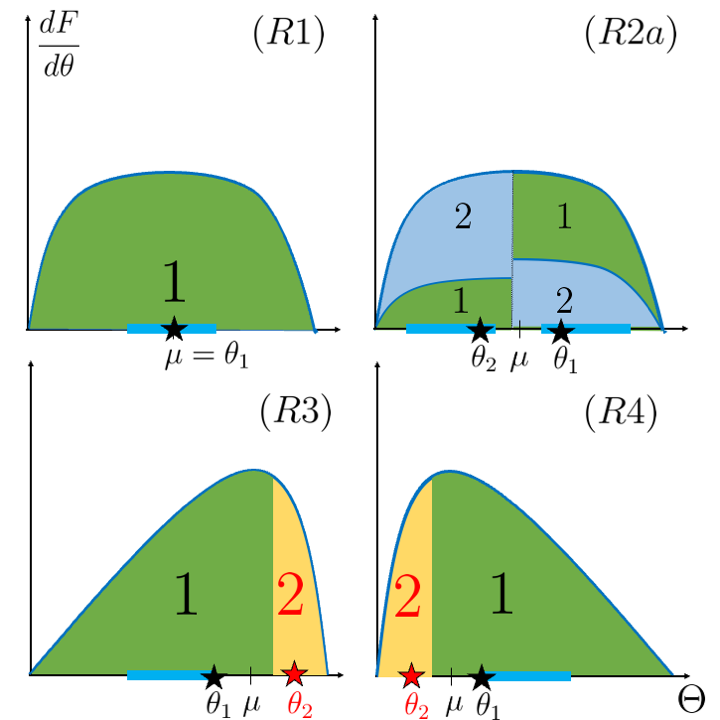}
    \caption{The pdf for $\theta$ and the intervals $[\lol{k},\hil{k}]$ (blue) with the solution for the four cases from Theorem $\ref{thm:stateless}$ marked by the induced posterior means $\theta_1, \theta_2$ (star). The posteriors are induced by mapping the probability mass to signals as marked (green, violet, yellow).}
    \label{fig:result}
\end{figure}

\section{Stateful Information Design}
\label{sec:stateful}
\begin{figure}[h]
\centering     
\subfigure[Goal sets for stateful objective when $\K = N = 2$ ($\nu_1 < \nu_2$).]{\label{fig:a}\includegraphics[width=60mm]{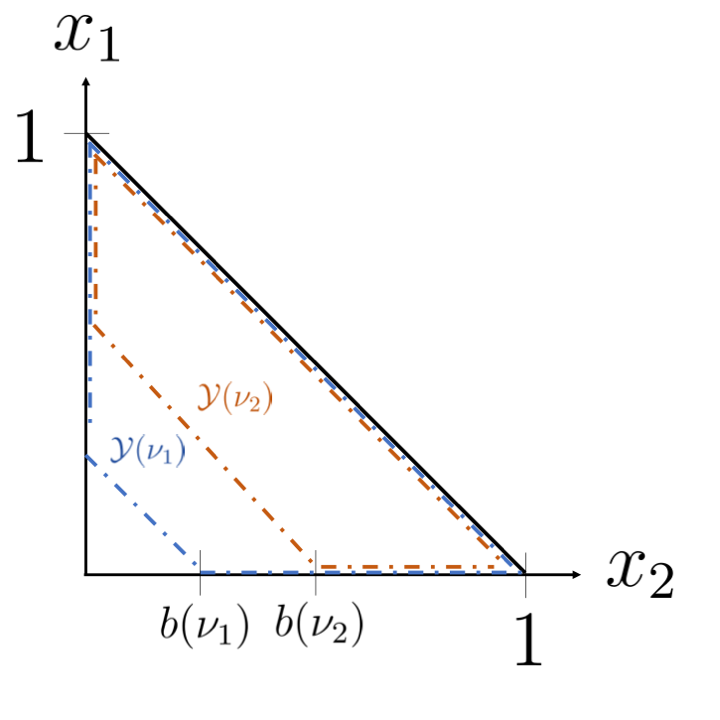}}
\subfigure[A more general nested structure of goal sets. ]{\label{fig:b}\includegraphics[width=60mm]{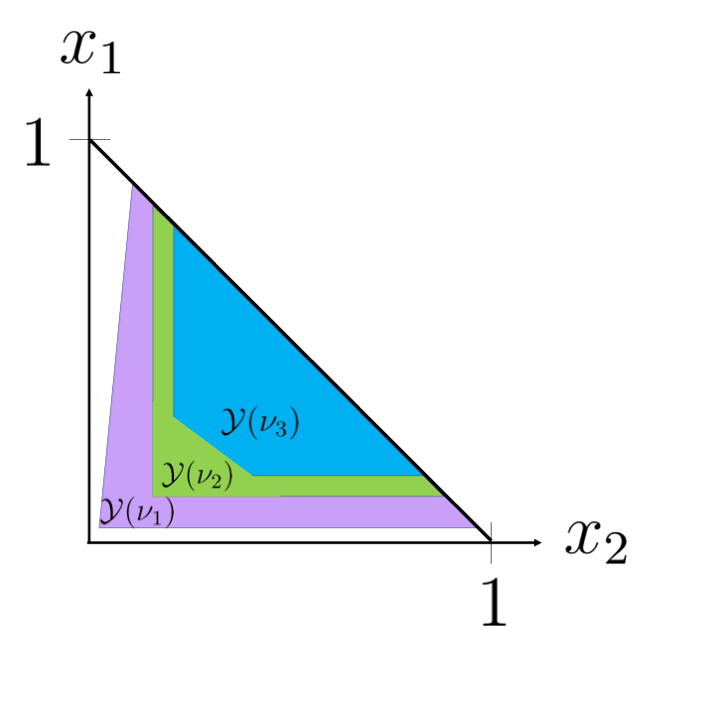}}
\caption{}
\end{figure}
\noindent In this section, we consider a stateful scaled capacity objective $\{\goal(\theta)\}_{\theta \in \Theta}$ that is defined by a collection of sets $\goal(\theta) = \{\fin: \|\fin\| \geq b(\theta)\}$ for some increasing function $b(\cdot)$ in the domain $\Theta$. Such sets are practically relevant because they model a capacity limit on in-person work that gets more stringent as the infectious risk increases. In this stateful setting, we cannot reduce the search over $\pi$ by searching over the corresponding direct mechanisms $\mathcal{T}_\pi$. This is because a posterior mean $\theta_i$ will induce a specific equilibrium outcome $\final(i)$ but we cannot evaluate whether $\final(i) \in \goal(\theta)$ if we only know $\theta_i$ and not the true $\theta$.

\noindent To proceed with analyzing the stateful case, for some $N > 0$ we restrict $\Theta \coloneqq \{\nu_1, \nu_2, .., \nu_N\}$ with $0 < \nu_1 < .. < \nu_N$. We denote $\mathbb{P}[\theta = \nu_i] = p_i$ for all $i \in [N]$. To simplify notation, we also denote $b(\nu_i) = b_i$ for all $i \in [N]$ where $b_i$ is strictly increasing in $i$; see Fig. \ref{fig:a}. 

\noindent Using \eqref{eqn:objective_stateful}, we can express the objective of the planner by: 
\begin{align}
    V^* &= \max_{\langle \{g_\nu\}_{\nu \in \Theta},\I\rangle} \mathbb{P}\{\final(i) \in \goal(\nu)\} \\
     &= \max_{\langle \{g_\nu\}_{\nu \in \Theta},\I\rangle} \sum_{j=1}^N \sum_{i=1}^{|\I|} p_j g_\nu(i) \mathbb{I}\{\final(i) \in \goal(\nu_j)\}  \label{eqn:stateful_a}
\end{align}
As with the stateless design, $\final(i)$ only depends on the induced posterior mean $\theta_i$ and $\final(i) \in \goal(\nu_j)$ if and only $\final(i) \in \goal(\nu_j) \cap \curve$. The latter follows if and only if $b_j \leq \|\final(i)\| \leq 1$ or $0 \leq m(\theta_i) \leq 1-b_j$. Using Proposition \ref{prop:eqbm_formula_prop_1} and Lemma \ref{lemma:m_smooth}, observe that this corresponds to $\theta_i \in [\gamma_j,\infty)$ for some constants $\gamma_j$ that are strictly increasing in $j$. For completeness, we denote $\gamma_0 = 0$ and $\gamma_{N+1} = \infty$.
We can then reformulate \eqref{eqn:stateful_a} as follows:
\begin{align}
    V^* &= \max_{\langle \{g_\nu\}_{\nu \in \Theta},\I\rangle} \sum_{j=1}^N \sum_{i=1}^{|\I|} p_j g_{\nu_j}(i) \mathbb{I}\{\gamma_j \leq \theta_i < \infty\} \label{eqn:stateful_b}
\end{align}
We highlight that despite these simplifications, our analysis is extensible to other practically relevant objectives aside from the scaled capacity objective. Suppose that the desirable sets satisfy Assumption A1 and are also (i) nested with $\goal(\nu_{i+1})\subset \goal(\nu_{i})$ for all $i \in [N-1]$, and (ii) half-bounded with $\init \in \goal(\nu_i)$ for all initial mass vectors $\init$ and all $i \in [N]$; see Fig. \ref{fig:b}. Then we arrive at an identical formulation to \eqref{eqn:stateful_b} since the minimal posterior mean beliefs for the agent outcomes to enter each $\goal(\nu_j)$ is an increasing sequence $\gamma_j$.

\noindent Using a similar reduction as in stateless design, we can restrict our mechanism to induce a single posterior mean $\theta_i$ to lie in each interval $[\gamma_{i-1},\gamma_{i})$ for $i \in [N+1]$. These $\theta_i$ consequently induce equilibrium outcomes that lie in $\goal(\nu_j)$ if and only if $\theta_i \geq \gamma_j$ or equivalently $i \geq j+1$. This implies that an optimal mechanism can be described by at most $N+1$ signals. In the proposition below, we provide a linear program (LP) to compute an optimal signalling mechanism for the stateful scaled capacity objective. We first introduce the LP:
\begin{align*}
    V^* &= \max_{\{ z_{ji}\}} \sum_{j=1}^N \sum_{i=j+1}^{N+1} z_{ji} \\
    &\text{s.t. }\sum_{i=1}^{N+1}  z_{ji} = p_j, \hspace{0.5cm}\forall j\in[N]\\
    &\hspace{1cm} 0 \leq z_{ji} \leq p_j,\hspace{0.5cm} \forall i\in[N+1],j\in[N] \\
    &\hspace{1cm} \gamma_{i-1}\sum_{j=1}^N z_{ji} \leq  \sum_{j=1}^N \nu_j z_{ji}, \hspace{0.5cm} \forall i\in[N+1] \\
    &\hspace{1cm} \sum_{j=1}^N \nu_j z_{ji} \leq \gamma_{i}\sum_{j=1}^N z_{ji}, \hspace{0.5cm} \forall i\in[N+1]
\end{align*}

\begin{prop}
\label{prop:stateful}
The optimal value $V^*$ to the above LP is equal to $\max_{\pi: \langle \{g_\theta\}_{\theta \in \Theta},\I\rangle} V(\pi)$ in \eqref{eqn:objective_stateful} under the stateful scaling capacity objective. For an optimal solution $\{z_{ji}^*\}_{j \in [N], i\in[N+1]}$ of the LP, the optimal signalling mechanism is given by $\pi^* = \langle \{g_\nu\}_{\nu \in \Theta},\I\rangle$ where $\I \coloneqq [N+1]$ and $g_{\nu_j}(i) = \frac{z_{ji}^*}{p_j}$ for all $i \in \I,j \in [N]$. 
\end{prop}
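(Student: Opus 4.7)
The plan is to establish the equality of the two optimization problems by exhibiting an objective-preserving correspondence between feasible solutions of the LP and signalling mechanisms that use at most $N+1$ signals, one per interval $[\gamma_{i-1}, \gamma_i)$ (with $\gamma_{N+1} = \infty$).

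First I would extend the reduction in Lemma \ref{lemma:num_signals_bdd} to the stateful setting. If a mechanism $\pi$ uses two signals whose posterior means both lie in a single interval $[\gamma_{k-1}, \gamma_k)$, the corresponding equilibrium outcomes lie in $\goal(\nu_j)$ for exactly the same set of indices $j \leq k-1$ by Proposition \ref{prop:eqbm_formula_prop_1} and Lemma \ref{lemma:m_smooth}. Hence these signals can be merged into one; the merged signal's posterior mean remains in $[\gamma_{k-1}, \gamma_k)$ by convex combination, and the contribution to \eqref{eqn:stateful_b} is preserved. Since the $N+1$ intervals partition $[0, \infty)$, an optimal mechanism can be taken to use at most $N+1$ signals, with signal $i$ associated with interval $[\gamma_{i-1}, \gamma_i)$.

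Next I would set up the bijection via $z_{ji} = p_j g_{\nu_j}(i)$. The first LP constraint together with $z_{ji} \geq 0$ encodes that each $g_{\nu_j}(\cdot)$ is a probability distribution over $[N+1]$. For each signal $i$ with positive probability $\sum_j z_{ji} > 0$, the posterior mean equals $\theta_i = \sum_j \nu_j z_{ji}/\sum_j z_{ji}$, and the last two LP constraints are precisely equivalent to $\theta_i \in [\gamma_{i-1}, \gamma_i]$. Under this identification, the LP objective $\sum_{j=1}^N \sum_{i=j+1}^{N+1} z_{ji}$ matches $\sum_{i,j} p_j g_{\nu_j}(i)\mathbb{I}\{\gamma_j \leq \theta_i\}$ in \eqref{eqn:stateful_b}, because $\theta_i \in [\gamma_{i-1}, \gamma_i)$ forces $\mathbb{I}\{\gamma_j \leq \theta_i\} = \mathbb{I}\{j \leq i-1\}$. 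Reading the bijection both ways—recovering a mechanism from any feasible LP solution via $g_{\nu_j}(i) = z_{ji}^*/p_j$, and recovering a feasible LP solution from any mechanism in the reduced class—yields $V^*_{\text{LP}} = \max_\pi V(\pi)$, and the explicit construction of $\pi^*$ in the statement follows immediately.

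The main obstacle I anticipate is the mild boundary mismatch between the half-open intervals $[\gamma_{i-1}, \gamma_i)$ used in the reduction and the closed inequalities used in the LP. I would resolve this by noting that any LP solution with a tight upper bound $\sum_j \nu_j z_{ji} = \gamma_i \sum_j z_{ji}$ (i.e.\ $\theta_i = \gamma_i$) can be relabeled as signal $i+1$ without affecting the objective: the equilibrium outcome at this posterior mean lies in $\goal(\nu_j)$ precisely for $j \leq i$, and the LP's value under relabeling captures these contributions through the $z_{j,i+1}$ variables. A secondary point to verify is that no additional Bayes-plausibility check beyond the LP constraints is required for implementability; this is automatic since $g_{\nu_j}(i) = z_{ji}^*/p_j$ directly defines a valid conditional signalling distribution for each state.
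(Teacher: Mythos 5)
Your proposal follows essentially the same route as the paper's proof: reduce to at most $N+1$ signals by merging any signals whose posterior means fall in the same interval $[\gamma_{i-1},\gamma_i)$ (the stateful analogue of Lemma \ref{lemma:num_signals_bdd}), then apply the change of variables $z_{ji} = p_j g_{\nu_j}(i)$ to turn the normalization and posterior-mean constraints into the LP. Your extra care about the half-open versus closed interval boundary is a detail the paper glosses over, but it does not change the argument.
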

Using this result, we can efficiently compute the optimal objective $V^*$ and an optimal signalling mechanism $\pi^*$ in polynomial time. 

\noindent If the $b_j$ and hence $\goal(\nu_j)$ were all identical, our setting becomes the stateless setting solved by Theorem $\ref{thm:stateless}$. In this special case, all $\nu_j$ correspond to a single $\gamma=\gamma_j$ and only two posterior means $\theta_1 < \gamma \leq \theta_2$ are required. If $\mu = \sum_{j \in [N]} p_j \nu_j \geq \gamma$ then this corresponds to regime R1 and non-informative mechanism is optimal. If $\mu < \gamma$ this corresponds to regime R4 and likewise we recover a discretized interval-based signalling mechanism over $\I = \{1,2\}$ by solving the LP and we can check that there exists a $\hat{j} \in [N]$ such that for all $j < \hat{j}$, $g_{\nu_j}(1) = 1$ and for all $j > \hat{j}$, $g_{\nu_j}(2) = 1$.  

\noindent Finally, for any mechanism $\pi$, the probability of complying with the capacity limit conditioned on $\theta = \nu_j$ can be represented by $V_j(\pi) \coloneqq  \mathbb{P}\{\final(i) \in \goal(\theta)|\theta=\nu_j\}$. Observe that $V(\pi) = p_j V_j(\pi)$ is a weighted average of the compliance for each possible realization of $\theta$. However, practically planners may face significantly more adverse risks of being overcapacitated as the infectiousness increases, and weight compliance heterogeneously across realizations of $\theta$. For this slightly more general objective, an optimal signalling mechanism for a different weighted sum $\sum_{j=1}^{N} \alpha_j V_j(\pi)$ for $\alpha_j \geq 0$ can likewise be solved using the linear program above by replacing the objective with $\max_{\{ z_{ji}\}} \sum_{i=j+1}^{N+1} \frac{\alpha_j}{p_j}z_{ji}$.
\section{Numerical Experiments}
\label{sec:numerical}
\noindent We now present numerical experiments to illustrate the benefit of our optimally designed signalling mechanisms relative to no information and full information benchmarks. We consider a stateless case where the in-person safe capacity the planner seeks to implement does not depend on $\theta$. We vary the capacity limit to demonstrate how the relative improvement with regard to each benchmark changes. 

\noindent We implement an experiment where $\goal = \{\fin : \norm{\fin} \geq b\}$ and we vary the constants $b$ between 0 and 1 in increments of $\frac{1}{20}$. The number of groups is fixed at $\K = 10$ and for each $b$ we run $10,000$ simulations by drawing $ \mathbf{x} \sim Unif(L_1), v_i \sim Unif[0,10], \theta \sim Unif[5,20]$ ($\mu = \frac{25}{2}$). We evaluate the optimal mechanism given by Theorem \ref{thm:stateless} against two benchmarks. The first benchmark is the non-informative mechanism where agents receive no added information and consequently have a posterior mean belief of $\mu$ over $\Theta$. The second benchmark is the fully-informative mechanism, where the true value of $\theta$ is directly revealed so agents maintain posterior mean beliefs $\theta'$ where $\theta'$ is sampled from $Unif[5,20]$. 
The resulting objective that measures the probability of compliance is averaged across simulations for each $b$ and plotted in Figure \ref{fig:stateless_carying_capacity}. 
\begin{figure}
    \centering
    \includegraphics[width=75mm]{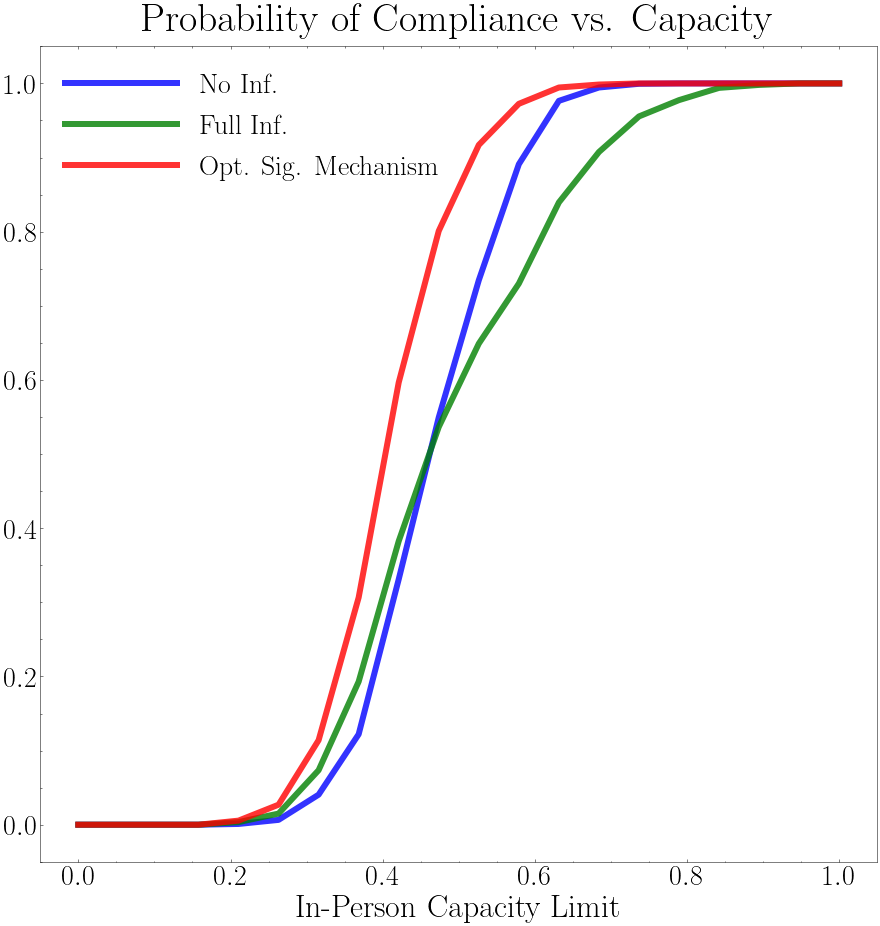}
    \caption{Comparison between capacity compliance across non-informative, fully-informative and optimal signalling.}
    \label{fig:stateless_carying_capacity}
\end{figure}
We indeed obtain that the optimal signalling mechanism strictly improves relative to both benchmarks. The improvement reduces as $b \nearrow 1$ as the intersection between the outcomes achievable in equilibrium and the outcomes the planner deems acceptable reduces to the empty set. Naturally, the mechanism's ability to influence agents to a set of outcomes decreases as the set of outcomes becomes smaller. Likewise, as $b \searrow 0$, the set of acceptable outcomes grows to encompass all outcomes. Hence, the mechanism approaches full compliance, but so do the benchmark mechanisms. Consequently, the relative value of our optimally-designed signalling mechanism decreases as the set of outcomes the planner is willing to accept grows. 
\noindent For the stateful capacity case, we highlight a specific instance where we scale-down the in-person capacity limit as the infectious risk increases. Consequently, the minimal posterior mean to induce outcomes in $\goal(\nu_j)$ are represented by $\gamma_j$ that are increasing in $j$. To gain intuition, we choose the specific set of tuples for $N=3$ where $\{p_i\}_{i \in \{1,2,3\}} = (0.3,0.3,0.4), \{\nu_i\}_{i \in \{1,2,3\}} = (0.4,0.6,1.0), \{\gamma_i\}_{i \in \{1,2,3\}} = (0.5, 0.9, 1.2)$ and compare the performance of the optimal mechanism against the same two benchmarks as in the stateless case.
In Fig. \ref{table:stateful_vary_capacity}, we tabulate each information mechanism's performance. Observe that under no information, agents mean belief over $\theta = \mu = \sum_{j=1}^N p_j \nu_j = 0.7$. Since agents receive no signal, the equilibrium outcome is independent of $\theta$ and agents only comply if $\gamma_j < \mu$ which happens with probability $p_1 = 0.3$. Under full information, agents perfectly observe $\theta$, so the outcome is compliant with the limit if and only if $\theta_j \geq \gamma_j$. This occurs with probability $0$ in our example. Thus, the stateful design significantly improves the probability of compliance relative to both benchmarks. We also observe that the stateful design does no worse in this setting than either benchmark in any of the conditional compliance measures $V_j(\pi)$.
\begin{figure}
    \centering
    \begin{tabular}{ |c|c|c|c|c| } 
    \hline
    Information Scheme & $V(\pi)$ & $V_1(\pi)$ & $V_2(\pi)$ & $V_3(\pi)$ \\     
    \hline
    No Information & 0.300 &1& 0& 0 \\ 
    Full Information  & 0.000 & 0& 0 &0 \\ 
    Stateful Design & 0.425 & 1& 0.417 &0 \\ 
 \hline
\end{tabular}
    
    \caption{Comparison of non-informative, fully informative and optimal stateful signalling in terms of the overall probability of compliance and conditional probabilities of compliance. }
    \label{table:stateful_vary_capacity}
\end{figure}

\section{Concluding Remarks}
\noindent In this paper, we introduced a model to study information provision for strategic hybrid workers. The central planner seeks to control the mass of in-person workers across each group in the equilibrium outcome. Our model
captures two key features: (a) a general objective that aims to maximize the probability that the equilibrium outcome lies in a particular set which may or may not be state-dependent; (b) heterogeneous workers making strategic decisions to trade-off in-person work and infectious risk. We provided a complete description of the equilibria of the game in response to the signals and derived the optimal signalling mechanism that the planner can employ. These results suggest guidelines for the design and deployment of signalling mechanisms as hard intervention measures are being phased down by public health agencies.

\bibliographystyle{IEEEtran}
\bibliography{ref} 

\begin{thebibliography}{10}
\def\url#1{}
\csname url@samestyle\endcsname
\providecommand{\newblock}{\relax}
\providecommand{\bibinfo}[2]{#2}
\providecommand{\BIBentrySTDinterwordspacing}{\spaceskip=0pt\relax}
\providecommand{\BIBentryALTinterwordstretchfactor}{4}
\providecommand{\BIBentryALTinterwordspacing}{\spaceskip=\fontdimen2\font plus
\BIBentryALTinterwordstretchfactor\fontdimen3\font minus
  \fontdimen4\font\relax}
\providecommand{\BIBforeignlanguage}[2]{{%
\expandafter\ifx\csname l@#1\endcsname\relax
\typeout{** WARNING: IEEEtran.bst: No hyphenation pattern has been}%
\typeout{** loaded for the language `#1'. Using the pattern for}%
\typeout{** the default language instead.}%
\else
\language=\csname l@#1\endcsname
\fi
#2}}
\providecommand{\BIBdecl}{\relax}
\BIBdecl

\bibitem{ji_lockdown_2020}
\BIBentryALTinterwordspacing
T.~Ji, H.-L. Chen, J.~Xu, L.-N. Wu, J.-J. Li, K.~Chen, and G.~Qin, ``Lockdown
  {Contained} the {Spread} of 2019 {Novel} {Coronavirus} {Disease} in
  {Huangshi} {City}, {China}: {Early} {Epidemiological} {Findings},''
  \emph{Clinical Infectious Diseases}, vol.~71, no.~6, pp. 1454--1460, Sep.
  2020.  \url{https://doi.org/10.1093/cid/ciaa390}
\BIBentrySTDinterwordspacing

\bibitem{nowzari_analysis_2016}
C.~Nowzari, V.~M. Preciado, and G.~J. Pappas, ``Analysis and {Control} of
  {Epidemics}: {A} {Survey} of {Spreading} {Processes} on {Complex}
  {Networks},'' \emph{IEEE Control Systems Magazine}, vol.~36, no.~1, pp.
  26--46, Feb. 2016.

\bibitem{drakopoulos_efficient_2014}
K.~Drakopoulos, A.~Ozdaglar, and J.~N. Tsitsiklis, ``An {Efficient} {Curing}
  {Policy} for {Epidemics} on {Graphs},'' \emph{IEEE Transactions on Network
  Science and Engineering}, vol.~1, no.~2, pp. 67--75, Jul. 2014.

\bibitem{chernozhukov_causal_2021}
\BIBentryALTinterwordspacing
V.~Chernozhukov, H.~Kasaha, and P.~Schrimpf, ``Causal {Impact} of {Masks},
  {Policies}, {Behavior} on {Early} {Covid}-19 {Pandemic} in the {U}.{S},''
  \emph{Journal of Econometrics}, vol. 220, no.~1, pp. 23--62, Jan. 2021.
  \url{http://arxiv.org/abs/2005.14168}
\BIBentrySTDinterwordspacing

\bibitem{acemoglu_optimal_2020}
\BIBentryALTinterwordspacing
D.~Acemoglu, V.~Chernozhukov, I.~Werning, and M.~D. Whinston, ``Optimal
  {Targeted} {Lockdowns} in a {Multi}-{Group} {SIR} {Model},'' National Bureau
  of Economic Research, Working {Paper} 27102, May 2020.
  \url{https://www.nber.org/papers/w27102}
\BIBentrySTDinterwordspacing

\bibitem{acemoglu_optimal_2021}
\BIBentryALTinterwordspacing
D.~Acemoglu, A.~Fallah, A.~Giometto, D.~Huttenlocher, A.~Ozdaglar, F.~Parise,
  and S.~Pattathil, ``Optimal adaptive testing for epidemic control: combining
  molecular and serology tests,'' \emph{arXiv:2101.00773}, Jan. 2021.
  \url{http://arxiv.org/abs/2101.00773}
\BIBentrySTDinterwordspacing

\bibitem{birge_controlling_2020}
\BIBentryALTinterwordspacing
J.~R. Birge, O.~Candogan, and Y.~Feng, ``\BIBforeignlanguage{en}{Controlling
  {Epidemic} {Spread}: {Reducing} {Economic} {Losses} with {Targeted}
  {Closures}},'' Social Science Research Network, Rochester, NY, {SSRN}
  {Scholarly} {Paper} ID 3590621, May 2020.
  \url{https://papers.ssrn.com/abstract=3590621}
\BIBentrySTDinterwordspacing

\bibitem{anderson_how_2020}
\BIBentryALTinterwordspacing
R.~M. Anderson, H.~Heesterbeek, D.~Klinkenberg, and T.~D. Hollingsworth,
  ``\BIBforeignlanguage{English}{How will country-based mitigation measures
  influence the course of the {COVID}-19 epidemic?}''
  \emph{\BIBforeignlanguage{English}{The Lancet}}, vol. 395, no. 10228, pp.
  931--934, Mar. 2020.
  \url{https://www.thelancet.com/journals/lancet/article/PIIS0140-6736(20)30567-5/fulltext}
\BIBentrySTDinterwordspacing

\bibitem{de_vericourt_informing_2021}
\BIBentryALTinterwordspacing
F.~de~Véricourt, H.~Gurkan, and S.~Wang, ``Informing the {Public} {About} a
  {Pandemic},'' \emph{Management Science}, vol.~67, no.~10, pp. 6350--6357,
  Oct. 2021.
  \url{https://pubsonline.informs.org/doi/abs/10.1287/mnsc.2021.4016}
\BIBentrySTDinterwordspacing

\bibitem{ely_rotation_2021}
\BIBentryALTinterwordspacing
J.~Ely, A.~Galeotti, and J.~Steiner, ``Rotation as {Contagion} {Mitigation},''
  \emph{Management Science}, vol.~67, no.~5, pp. 3117--3126, May 2021.
  \url{https://pubsonline.informs.org/doi/abs/10.1287/mnsc.2020.3910}
\BIBentrySTDinterwordspacing

\bibitem{hernandez-chanto_contagion_2021}
\BIBentryALTinterwordspacing
A.~Hernandez-Chanto, C.~Oyarzun, and J.~Hedlund,
  ``\BIBforeignlanguage{en}{Contagion {Management} through {Information}
  {Disclosure}},'' Social Science Research Network, Rochester, NY, {SSRN}
  {Scholarly} {Paper} ID 3988157, Dec. 2021.
  \url{https://papers.ssrn.com/abstract=3988157}
\BIBentrySTDinterwordspacing

\bibitem{dworczak_simple_2019}
P.~Dworczak and G.~Martini, ``The simple economics of optimal persuasion,''
  \emph{Journal of Political Economy}, vol. 127, no.~5, pp. 1993--2048, 2019.

\bibitem{kolotilin_optimal_2018}
A.~Kolotilin, ``\BIBforeignlanguage{en}{Optimal information disclosure: {A}
  linear programming approach},'' \emph{\BIBforeignlanguage{en}{Theoretical
  Economics}}, vol.~13, no.~2, pp. 607--635, 2018.

\bibitem{guo_interval_2019}
\BIBentryALTinterwordspacing
Y.~Guo and E.~Shmaya, ``\BIBforeignlanguage{en}{The {Interval} {Structure} of
  {Optimal} {Disclosure}},'' \emph{\BIBforeignlanguage{en}{Econometrica}},
  vol.~87, no.~2, pp. 653--675, 2019.
  \url{https://onlinelibrary.wiley.com/doi/abs/10.3982/ECTA15668}
\BIBentrySTDinterwordspacing

\bibitem{kamenica_bayesian_2011}
\BIBentryALTinterwordspacing
E.~Kamenica and M.~Gentzkow, ``\BIBforeignlanguage{en}{Bayesian
  {Persuasion}},'' \emph{\BIBforeignlanguage{en}{American Economic Review}},
  vol. 101, no.~6, pp. 2590--2615, Oct. 2011.
  \url{https://www.aeaweb.org/articles?id=10.1257/aer.101.6.2590}
\BIBentrySTDinterwordspacing

\bibitem{candogan_information_2020}
\BIBentryALTinterwordspacing
O.~Candogan, ``Information {Design} in {Operations}.''\hskip 1em plus 0.5em
  minus 0.4em\relax INFORMS, Nov. 2020, pp. 176--201.
  \url{https://pubsonline.informs.org/doi/abs/10.1287/educ.2020.0217}
\BIBentrySTDinterwordspacing

\bibitem{bergemann_information_2019}
\BIBentryALTinterwordspacing
D.~Bergemann and S.~Morris, ``\BIBforeignlanguage{en}{Information {Design}: {A}
  {Unified} {Perspective}},'' \emph{\BIBforeignlanguage{en}{Journal of Economic
  Literature}}, vol.~57, no.~1, pp. 44--95, Mar. 2019.
  \url{https://www.aeaweb.org/articles?id=10.1257/jel.20181489}
\BIBentrySTDinterwordspacing

\bibitem{kamenica_bayesian_2019}
\BIBentryALTinterwordspacing
E.~Kamenica, ``Bayesian {Persuasion} and {Information} {Design},'' \emph{Annual
  Review of Economics}, vol.~11, no.~1, pp. 249--272, 2019.
  \url{https://doi.org/10.1146/annurev-economics-080218-025739}
\BIBentrySTDinterwordspacing

\bibitem{gentzkow_rothschild-stiglitz_2016}
\BIBentryALTinterwordspacing
M.~Gentzkow and E.~Kamenica, ``\BIBforeignlanguage{en}{A
  {Rothschild}-{Stiglitz} {Approach} to {Bayesian} {Persuasion}},''
  \emph{\BIBforeignlanguage{en}{American Economic Review}}, vol. 106, no.~5,
  pp. 597--601, May 2016.
  \url{https://www.aeaweb.org/articles?id=10.1257/aer.p20161049}
\BIBentrySTDinterwordspacing

\bibitem{candogan_optimal_2021}
\BIBentryALTinterwordspacing
O.~Candogan and P.~Strack, ``\BIBforeignlanguage{en}{Optimal {Disclosure} of
  {Information} to {Privately} {Informed} {Agents}},'' Social Science Research
  Network, Rochester, NY, {SSRN} {Scholarly} {Paper} ID 3773326, Jan. 2021.
  \url{https://papers.ssrn.com/abstract=3773326}
\BIBentrySTDinterwordspacing

\bibitem{olmez_modeling_2021}
\BIBentryALTinterwordspacing
S.~Y. Olmez, S.~Aggarwal, J.~W. Kim, E.~Miehling, T.~Başar, M.~West, and P.~G.
  Mehta, ``\BIBforeignlanguage{en}{Modeling {Presymptomatic} {Spread} in
  {Epidemics} via {Mean}-{Field} {Games}},'' Nov. 2021.
  \url{https://arxiv.org/abs/2111.10422v1}
\BIBentrySTDinterwordspacing

\bibitem{la_torre_mobility_2022}
\BIBentryALTinterwordspacing
D.~La~Torre, D.~Liuzzi, R.~Maggistro, and S.~Marsiglio,
  ``\BIBforeignlanguage{en}{Mobility {Choices} and {Strategic} {Interactions}
  in a {Two}-{Group} {Macroeconomic}–{Epidemiological} {Model}},''
  \emph{\BIBforeignlanguage{en}{Dynamic Games and Applications}}, vol.~12,
  no.~1, pp. 110--132, Mar. 2022.
  \url{https://doi.org/10.1007/s13235-021-00413-z}
\BIBentrySTDinterwordspacing

\bibitem{cianfanelli_lockdown_2021}
\BIBentryALTinterwordspacing
L.~Cianfanelli, F.~Parise, D.~Acemoglu, G.~Como, and A.~Ozdaglar, ``Lockdown
  interventions in {SIR} model: {Is} the reproduction number the right control
  variable?'' \emph{arXiv:2112.06546 [physics]}, Dec. 2021.
  \url{http://arxiv.org/abs/2112.06546}
\BIBentrySTDinterwordspacing

\bibitem{wu_optimal_2020}
\BIBentryALTinterwordspacing
M.~Wu, D.~Shelar, R.~Gopalakrishnan, and S.~Amin,
  ``\BIBforeignlanguage{en}{Optimal {Testing} {Strategy} for {Containing}
  {COVID}-19: {A} {Case}-{Study} on {Indian} {Migrant} {Worker}
  {Population}},'' Rochester, NY, {SSRN} {Scholarly} {Paper}, Oct. 2020.
  \url{https://papers.ssrn.com/abstract=3703429}
\BIBentrySTDinterwordspacing

\bibitem{ivanov_optimal_2015}
\BIBentryALTinterwordspacing
M.~Ivanov, ``\BIBforeignlanguage{en}{Optimal {Signals} in {Bayesian}
  {Persuasion} {Mechanisms}},'' 2015.
  \url{https://www.semanticscholar.org/paper/Optimal-Signals-in-Bayesian-Persuasion-Mechanisms-Ivanov/837a24aeef9c0d5ce94d4b6cbd4eb201a6dc6c2f}
\BIBentrySTDinterwordspacing

\bibitem{blackwell_theory_1954}
D.~Blackwell and M.~Girshick, ``Theory of {Games} and {Statistical}
  {Decisions}.''\hskip 1em plus 0.5em minus 0.4em\relax New York: John Willey
  and Sons, 1954.

\bibitem{candogan_optimality_2019}
\BIBentryALTinterwordspacing
O.~Candogan, ``\BIBforeignlanguage{en}{Optimality of {Double} {Intervals} in
  {Persuasion}: {A} {Convex} {Programming} {Framework}},'' Social Science
  Research Network, Rochester, NY, {SSRN} {Scholarly} {Paper} ID 3452145, Sep.
  2019.  \url{https://papers.ssrn.com/abstract=3452145}
\BIBentrySTDinterwordspacing

\bibitem{shaked_univariate_2007}
\BIBentryALTinterwordspacing
``\BIBforeignlanguage{en}{Univariate {Stochastic} {Orders}},'' in
  \emph{\BIBforeignlanguage{en}{Stochastic {Orders}}}, ser. Springer {Series}
  in {Statistics}, M.~Shaked and J.~G. Shanthikumar, Eds.\hskip 1em plus 0.5em
  minus 0.4em\relax New York, NY: Springer, 2007, pp. 3--79.
  \url{https://doi.org/10.1007/978-0-387-34675-5_1}
\BIBentrySTDinterwordspacing

\bibitem{ivanov_optimal_2021}
\BIBentryALTinterwordspacing
M.~Ivanov, ``\BIBforeignlanguage{en}{Optimal monotone signals in {Bayesian}
  persuasion mechanisms},'' \emph{\BIBforeignlanguage{en}{Economic Theory}},
  vol.~72, no.~3, pp. 955--1000, Oct. 2021.
  \url{https://doi.org/10.1007/s00199-020-01277-x}
\BIBentrySTDinterwordspacing

\end{thebibliography}

\section{Appendix}

\subsection{Proof of Proposition \ref{prop:eqbm_formula_prop_1}}
\noindent By construction, the groups of agents are implicitly ordered by their preference for working in-person. The following lemma
shows that \textit{under any equilibrium} $\final(i)$ as defined in Definition $\ref{defn:pse}$, there is a critical group $k^*(i) \in [\K]$ such that all agents of groups $l$ with smaller benefits $v_l < v_{k^*(i)}$ work remotely, and all agents of groups $l$ with larger benefits $v_l > v_{k^*(i)}$ work in-person.

\begin{lemma}
\label{lemma:rectangular_curve}
For any equilibrium $\final(i)$, there exists a critical group $k^*(i) \in [\K]$ such that for all $l < k^*(i)$, $y_l^*(i) = 0$ and all $l > k^*(i)$ $y_l^*(i) = x_l$.
\end{lemma}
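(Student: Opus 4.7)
The plan is to prove the lemma by contradiction, leveraging the fact that the expected infectious cost $\E_{\theta \sim G}[\beta(\theta,\fin^*)]$ is a \emph{group-independent} quantity: by \eqref{eqn:infectious_cost}, it depends on $\fin^*$ only through $\|\fin^*\|$, and certainly not on the group index $k$. Consequently, the expected in-person utility $\E_{\theta \sim G}[u_k(\ws,\fin^*;\theta)] = v_k - \E_{\theta \sim G}[\beta(\theta,\fin^*)]$ differs across groups only through the additive constant $v_k$. Since $v_k$ is strictly decreasing in $k$, the ranking of the in-person incentive across groups is pinned down entirely by the group index.

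The core step is to rule out any ``inversion'' in equilibrium, namely that there exist indices $k<\ell$ with $y_k^*(i)>0$ and $y_\ell^*(i)<x_\ell$. Taking contrapositives of (i) and (ii) in Definition \ref{defn:pse} applied respectively to $k$ and $\ell$, I would deduce $\E_{\theta \sim G}[u_k(\ws,\fin^*;\theta)] \leq 0$ and $\E_{\theta \sim G}[u_\ell(\ws,\fin^*;\theta)] \geq 0$, which rearrange to
\[
v_k \; \leq \; \E_{\theta \sim G}[\beta(\theta,\fin^*)] \; \leq \; v_\ell .
\]
This contradicts $v_k > v_\ell$, which holds because $v$ is strictly decreasing in the index and $k<\ell$.

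With no inversion possible, I would construct $k^*(i)$ explicitly: set $k^*(i) = \min\{k \in [\K]: y_k^*(i) > 0\}$ whenever this set is non-empty, and $k^*(i) = \K$ otherwise (the degenerate case of all agents working in-person). The condition $y_\ell^*(i)=0$ for $\ell < k^*(i)$ then follows by minimality of $k^*(i)$. The condition $y_\ell^*(i) = x_\ell$ for $\ell > k^*(i)$ follows from the no-inversion claim: if some such $\ell$ had $y_\ell^*(i) < x_\ell$, pairing it with $k^*(i)$ (which by construction satisfies $y_{k^*(i)}^*(i) > 0$) would produce exactly the forbidden configuration.

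I do not anticipate a serious obstacle here; the whole argument is one contrapositive chain plus a minimality definition. The only mild subtlety worth flagging is the boundary case in which no group has any remote mass, which the definition $k^*(i)=\K$ handles gracefully by rendering the universal quantifier ``$\ell > k^*(i)$'' vacuous while still giving $y_\ell^*(i)=0$ for all $\ell < \K$, consistent with everyone choosing $\ws$.
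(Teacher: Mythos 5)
Your proof is correct and follows essentially the same route as the paper: define $k^*(i)$ as the minimal group with positive remote mass and derive a contradiction from an inversion via the contrapositives of Definition \ref{defn:pse}, using that the expected infectious cost is common across groups while $v_k$ is strictly decreasing. The only (welcome) difference is that you explicitly handle the degenerate case where no group has remote mass, which the paper's construction leaves implicit.
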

\begin{proof}
Our proof proceeds by construction. Suppose that $\final(i)$ is an equilibrium and let $k^*(i) = \min\{i: y_i^*(i) > 0\}$. Suppose, by contradiction, that there exists $l > k^*(i)$ such that $y_l^*(i) < x_l$. Then, $v_l \geq c_1(\|\fin\|)\theta_i+c_2(\|\fin\|)$. However, since $v_{k^*(i)} > v_l$, this implies that $v_{k^*(i)} > c_1(\|\fin\|)\theta_i+c_2(\|\fin\|)$ and by Definition $\ref{defn:pse}$, $y_{k^*(i)}(i) = 0$. This is a contradiction, so it is proven that for all $l > k^*(i)$, $y_l^*(i) = x_l$ and hence $k^*(i)$ satisfies the conditions of the critical group.
\end{proof}

\noindent The next lemma precisely computes the in-person equilibrium mass in response to signal $i$. Given the restriction on $\final(i)$ established in the previous lemma, this precisely identifies the equilibrium mass vector $\final(i)$ that the agents will achieve as in the proposition statement and proves the result. 
\begin{lemma}
\label{lemma:compute_inperson_mass}
In any equilibrium $\final(i)$, the equilibrium in-person mass $ m(\theta_i) \coloneqq 1-\norm{\final(i)} = \sup\{u: v(u) \geq c_1(1-u)\theta_i+c_2(1-u)\}$.
\end{lemma}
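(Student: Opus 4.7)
The plan is to reduce to a one-dimensional problem via Lemma \ref{lemma:rectangular_curve}, then pin down the equilibrium in-person mass as the threshold where the marginal benefit schedule $v(u)$ meets the aggregate cost schedule $f(u) \coloneqq c_1(1-u)\theta_i + c_2(1-u)$. The key structural observation is that $v(u)$ is a non-increasing right-continuous step function on $[0,1]$ (with downward jumps at each $s_j$, since $v_j$ is strictly decreasing in $j$), while $f(u)$ is continuous and non-decreasing because $c_1$ and $c_2$ are non-increasing on $[0,1]$ and $1-u$ is decreasing in $u$. Hence the set $S \coloneqq \{u \in [0,1] : v(u) \geq f(u)\}$ is downward-closed: if $u_1 < u_2$ with $u_2 \in S$, then $v(u_1) \geq v(u_2) \geq f(u_2) \geq f(u_1)$, so $u_1 \in S$. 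Consequently $S$ is an interval containing $0$ and $u^* \coloneqq \sup S$ is well-defined; the goal is to show $u^* = m(\theta_i)$.

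Invoking Lemma \ref{lemma:rectangular_curve} yields a critical group $k^* \coloneqq \min\{l : y_l^*(i) > 0\}$, so the equilibrium in-person mass $m \coloneqq m(\theta_i)$ lies in $[s_{k^*-1}, s_{k^*})$. I would split into two cases based on whether group $k^*$ is mixed or fully remote. In the \emph{mixed} case $m \in (s_{k^*-1}, s_{k^*})$, group $k^*$ has agents in both actions, so Definition \ref{defn:pse} forces indifference $v_{k^*} = f(m)$. Since $v(m) = v_{k^*}$, we have $m \in S$ and thus $u^* \geq m$. Conversely, for $u \in (m, s_{k^*})$, $v(u) = v_{k^*} = f(m) \leq f(u)$, and for $u \geq s_{k^*}$, $v(u) \leq v_{k^*+1} < v_{k^*} \leq f(u)$, so no $u > m$ lies strictly inside $S$, giving $u^* = m$.

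In the \emph{fully remote at $k^*$} case $m = s_{k^*-1}$ (with $y_{k^*}^* = x_{k^*}$), Definition \ref{defn:pse} yields $v_{k^*} \leq f(m)$, else agents in $k^*$ would prefer in-person; moreover if $k^* > 1$, group $k^*-1$ must satisfy $v_{k^*-1} \geq f(m)$ to sustain its fully-in-person choice. Then for $u \in [s_{k^*-2}, s_{k^*-1})$, $v(u) = v_{k^*-1} \geq f(m) \geq f(u)$, so $u \in S$ and $u^* \geq s_{k^*-1} = m$. Conversely, for $u \in [s_{k^*-1}, s_{k^*})$, $v(u) = v_{k^*} \leq f(m) \leq f(u)$, and for $u \geq s_{k^*}$ the strict drop of $v$ below $v_{k^*}$ against $f(u) \geq f(m) \geq v_{k^*}$ gives $v(u) < f(u)$, so $u^* \leq m$. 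The degenerate subcase $k^* = 1$ (all agents remote, $m = 0$) follows analogously from $v_1 \leq f(0)$.

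The \textbf{main obstacle} is handling the boundary behavior at the step discontinuities of $v$, since $\sup S$ may include or exclude the endpoint $m$ depending on whether equilibrium conditions hold with equality. This is resolved by playing the continuity of $f$ against the right-continuity of $v$: their one-sided limits straddle at $u = m$, forcing the supremum to coincide with the equilibrium mass in both cases. In degenerate configurations where $f$ is flat on an interval abutting $m$ so that multiple equilibria coexist, the supremum in the formula still identifies the maximal equilibrium in-person mass consistent with Definition \ref{defn:pse}, which is the equilibrium selected by the lemma's statement.
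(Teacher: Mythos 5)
Your proof is correct and follows essentially the same route as the paper's: a two-case split on whether the equilibrium mass lies in the interior of a group's range (forcing indifference $v_{k^*}=f(m)$) or at a boundary $s_{k^*-1}$ (forcing the two one-sided inequalities), combined with the monotonicity of the step function $v$ against the increasing cost $f$ to identify the supremum; your ``downward-closed set $S$'' framing is just a cleaner packaging of the same argument. The only point where you are looser than the paper is that you initially describe $f$ as merely non-decreasing and then need strictness to exclude $u>m$ from $S$ — the paper asserts $c_1(1-u)\theta_i+c_2(1-u)$ is \emph{strictly} increasing (valid since $c_1$ is strictly decreasing and $\theta_i>0$), which dispenses with the degenerate flat case you hedge about in your final paragraph.
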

\begin{proof}
Suppose $m(\theta_i)  = z$ with $s_{j-1} < z < s_j$ for some $j$, then by Definition \ref{defn:pse},  $v_j = c_1(1-z)\theta_i+c_2(1-z)$. Notice, $v_j = v(z) = c_1(1-z)\theta_i+c_2(1-z)$ and that $v(u)$ is non-increasing and $c_1(1-u)\theta_i+c_2(1-u)$ is strictly increasing in $u$, so $z = \sup\{u: v(u) \geq c_1(1-u)\theta_i+c_2(1-u)\}$.\\
Suppose $m(\theta_i)  = z$ with $z = s_j$ for some $j$. Then $v(z) = v_{j+1}$ and for all $\epsilon < x_j$, $v(z-\epsilon) = v_{j}$. But by Definition \ref{defn:pse}, $v(z) = v_{j+1} \leq c_1(1-z)\theta_i+c_2(1-z)$ and for all $\epsilon < x_j$, $v(z-\epsilon) = v_{j} \geq c_1(1-z+\epsilon)\theta_i+c_2(1-z+\epsilon)$. But again since $c_1(1-u)\theta_i+c_2(1-u)$ is strictly increasing in $u$, this implies that $z = \sup\{u: v(u) \geq c_1(1-u)\theta_i+c_2(1-u)\}$.
\end{proof}

\subsection{Proof of Lemma
\ref{lemma:m_smooth}}
\noindent Observe that $m(\theta) \leq 1$ since for all $u > 1$, $v(u) = 0$ and $c_1(1-u) \theta + c_2(1-u) > 0$. Similarly, since $v(0) \geq 0 = c_1(1) \theta + c_2(1)$, $m(\theta) \geq 0$ so clearly $m(\theta)$ is bounded.

\noindent Similarly, letting $f(u) = \frac{v(u)-c_2(1-u)}{c_1(1-u)}$, observe $m(\theta) = \sup\{u: f(u) \geq \theta \}$. Since $c_1(1-u)$ is a strictly increasing function in $u$ and $v(u)-c_2(1-u)$ is a non-increasing function, $f(\cdot)$ is monotonically decreasing. For any $\theta' \leq \theta''$ notice $\{u: f(u) \geq \theta'' \} \subseteq \{u: f(u) \geq \theta' \}$, so $m(\theta'') \leq m(\theta')$ and hence $m$ is non-increasing. Moreover, applying Berge's Maximum Principle, we can determine that $m(\theta)$ is continuous.$\blacksquare$

\subsection{Proof of Lemma \ref{lemma:intervals}}
The proof is quite immediate since notice that $\curve = \cup_{i=1}^{\K}  \{z(\init,u): s_{k-1} \leq u \leq s_k\}$. Consequently, $\goal \cap \curve = \cup_{i=1}^{\K}  \big(\goal \cap \{z(\init,u): s_{k-1} \leq u \leq s_k\}\big)$.
However, notice $\{z(\init,u): s_{k-1} \leq u \leq s_k\}$ is a line segment in $\R^\K$ so its intersection with a convex, compact set $\goal \subset \R^\K$ is just another line segment $\{z(\init,u): \tilde{\omega}_k^1  \leq u \leq \tilde{\omega}_k^2\}$ for some constants $\tilde{\omega}_k^1, \tilde{\omega}_k^2$. Eliminating empty intervals and condensing the intervals if $\tilde{\omega}_{j}^2 = \tilde{\omega}_{j+1}^1$ for any $j$ yields the result.

\subsection{Proof of Lemma \ref{lemma:num_signals_bdd}}
The proof follows directly from Lemma 1 in \cite{candogan_optimality_2019}.
\subsection{Proof of Theorem \ref{thm:stateless}}

(R1) The proof for distributions $F$ in R1 follows from construction as $\pi$ induces $\mathcal{T}_\pi = \{(1,\mu)\}$ and from Equation $\eqref{eqn:red_1_form}$ this implies that $V^* = 1$ which is maximal. \\

(R2a) The proof for distributions $F$ in R2a follows from construction as $\pi$ induces $\mathcal{T}_\pi = \{(\lambda F(t) +  \alpha (1-F(t)),\delta(\alpha,\lambda,t)),((1-\lambda) F(t) +  (1-\alpha) (1-F(t)), \delta(1-\alpha,1-\lambda,t) )\}$ and from Equation $\eqref{eqn:red_1_form}$ this implies that $V^* = 1$ which is maximal. \\

(R3-4) \\
\noindent The proofs for R3 and R4 are symmetric (can modify all objects in $\Theta$ by going from $\theta$ to $M-\theta$ and the proof is analogous) so we only prove for R3. Moreover, without loss of generality, assume $M=1$ as an optimal signalling mechanism is invariant to linear scaling.\\
Observe that we aim to solve:
\begin{align*}
V^* = \max_{\pi: \langle \{g_\theta\}_{\theta \in \Theta},\I\rangle} \sum_{k=1}^{\KK} q_i \mathbb{I}\{\lol{k} \leq \theta_i \leq \hil{k}\}
\end{align*}
However, using the established properties of signalling mechanisms, we can search over the distributions of the posterior means $G$ by searching over the space of all mean-preserving contractions of $F$, $G \succcurlyeq F$. Recall that $G \succcurlyeq F$ if and only if $\int_{0}^{x} (1-G(t))dt \geq \int_{0}^{x} (1-F(t))dt$ for all $x \in [0,1]$ with equality at $x=1$. In the quantile space, this is equivalent to $\int_{0}^{x} G^{-1}(t)dt \geq \int_{0}^{x} F^{-1}(t)dt$ for all $x \in [0,1]$ with equality at $x=1$. We will refer to this constraint as the MPC constraint.\\
Moreover, using Lemma \ref{lemma:num_signals_bdd} that we can further restrict $G$ to be a discrete distribution using $\KK+1$ mass points (each corresponding to a posterior mean) with $\theta_i \in [\lol{i},\hil{i}]$ with probability $q_i$ and $\theta_{\KK+1}$ with probability $q_{\KK+1}$. Since the posterior mean $\theta_i$ is induced with probability $q_i$ $\mu > \hil{\KK}$ and $\sum_{i=1}^{\KK+1} q_i \theta_i = \mu$. However, since $\theta_i <  \hil{\KK} < \mu$ for all $i \in [\KK]$, this implies that $\theta_{\KK+1} > \mu$ and $q_{\KK+1} > 0$.  Hence, we can re-express the objective as follows:
\begin{align*}
V^* &= \max_{G \succcurlyeq F;\forall i \in [\KK] \mathbb{P}_{\theta \sim G}[\theta = \theta_i] = q_i>0\text{ for unique }\theta_i \in [\lol{i},\hil{i}]} \sum_{k=1}^{\KK} q_i \\
&= \min_{G \succcurlyeq F;\forall i \in [\KK] \mathbb{P}_{\theta \sim G}[\theta = \theta_i] = q_i>0\text{ for unique }\theta_i \in [\lol{i},\hil{i}]} q_{\KK+1}
\end{align*}
Observe that $G$ is parameterized by $(q_i,\theta_i)_{i \in [\KK+1]}$, so we can equivalently express the condition that $G \succcurlyeq \dist$ as in Lemma 4 of \cite{candogan_optimal_2021} by adding constraints that enforce the mean-preserving contraction results: 
 \begin{align}
 &\min q_{\KK+1}               \\
 \text{s.t. }&\sum_{i=1}^{\KK+1} q_i = 1\\
 &\sum_{i=1}^{\KK+1} q_i \theta_i = \mu \label{eqn:mean_mean}\\
 & \lol{l} \leq \theta_l \leq \hil{l} \hspace{1cm} \forall l \in [\KK]\\
 & \mu \leq \theta_{\KK+1} \leq 1\\
 & \sum_{j=1}^{m} q_j\theta_j \geq \int_0^{\sum_{j=1}^{m} q_j} \dist^{-1}(t)dt\hspace{0.25cm} \forall m \in [\KK] \label{eqn:MPC}
\end{align}
 We can reduce the enforcement of the MPC constraints in Equation \eqref{eqn:MPC} from $\int_0^s \dist^{-1}(t)dt \leq \int_0^x G^{-1}(t)dt$ for all $x \in [0,1]$ to merely checking for $x \in \mathcal{S} \coloneqq \{\sum_{j=1}^m q_j: m \in [\KK+1]\}$. This is done without loss since $\int_0^s \dist^{-1}(t)dt$ is convex and $\int_0^s G^{-1}(t)dt$ is piecewise-linear with breakpoints at these $x \in \mathcal{S}$ so the remaining constraints are implied.\\
At this point, we make a brief comment on the optimal signalling mechanism for the general regime of (R2). Observe that for (R3) we can sum in order of index in Equation \eqref{eqn:MPC} since $\theta_i$ is increasing for all $i \in [\KK+1]$. This need not necessarily be true in (R2) as the posterior mean that lies outside these intervals could be positioned anywhere relative to the other posterior means depending on the characteristics of the distribution. However, by iterating over all $\KK+1$ possible placements of $\theta_{\KK+1}$ relative to $\{\theta_i\}_{i\in[\KK]}$, this program will solve for the optimal direct signalling mechanism for (R2) by taking the resultant $\{(q_i^*,\theta_i^*)\}_{i \in [\KK+1]}$. While the formulation is nonconvex as written, introducing variables $z_i \coloneqq q_i\theta_i$ as in \cite{candogan_optimal_2021} can convexify the form and make this program efficiently solvable.\\
For (R3), however, we can continue to simplify. Recall that $q_{\KK+1}^* > 0$ and next suppose that $q_j^* > 0$ for some $j < \KK$. This is a contradiction; consider decreasing $q_j^*$ by $(\frac{\theta_{\KK+1}-\theta_{\KK}}{\theta_{\KK+1}-\theta_{j}})\epsilon$, increasing $q_{\KK}^*$ by $\epsilon$, decreasing $q_{\KK+1}^*$ by $(\frac{\theta_{\KK}-\theta_{j}}{\theta_{\KK+1}-\theta_{j}})\epsilon$. Then, observe that the objective is strictly improved and all the non-MPC constraints are still obeyed by design. Moreover, notice using the alternative definition of mean-preserving contraction in Theorem 3.A.7. of \cite{shaked_univariate_2007}, this convex stochastic modification creates a mean-preserving contraction of the $G^*$ that implements $\{(q_i^*,\theta_i^*)\}$ so by transitivity it must be a mean-preserving contraction of $F$ and satisfy the MPC constraints in Equation \eqref{eqn:MPC}. \\
Renaming $q_{\KK}$ to $q_1$ and $q_{\KK+1}$ to $q_2$, we can reduce the optimization to:
 \begin{align}
 &\max q_1             \\
 \text{s.t. }& q_1 \theta_{1} + q_2 \theta_2 = \mu \\
 & q_1 + q_2 = 1 \\
 & q_1\theta_{1} \geq \int_0^{q_1} \dist^{-1}(t)dt \label{eqn:convv} \\
  & \lol{\KK} \leq  \theta_1 \leq \hil{\KK} \\
 & \mu \leq \theta_{2} \leq 1
\end{align}
 Suppose we find a solution $q_1^*, q_2^*, \theta_1^*, \theta_2^*$ where $\theta_{1}^* < \hil{\KK}$. Observe that we can find another solution with the same objective by choosing $q_1' = q_1^*, q_2' = q_2^*$, $\theta_{1}^{'} = \hil{\KK}$ and $\theta_{2}^{'} = \mu+\frac{(\mu-\hil{\KK})(\theta_2^* - \mu)}{(\mu-\theta_1^*)}$. Observe that this solution obeys all the previous constraints. Consequently, we can restrict $\theta_1 = \hil{\KK}$ without loss.\\
Observing that $f(x) = \int_0^{x} \dist^{-1}(t)dt$ is convex with $f(0) = 0$ implies that $\theta x \geq f(x)$ for all $x \in [0,h(\theta)]$. Replacing $\theta_{\KK}$ with $\hil{\KK}$, $q_2$ with $1-q_1$ and then the constraint in \eqref{eqn:convv} with $0 \leq q_1 \leq h(\hil{\KK})$ results in the following optimization:
\begin{align*}
    &\max_{q_1,\theta_2} q_1 \\
    \text{ s.t.   } &q_1 \hil{\KK} + (1-q_1) \theta_2 = \mean\\
    & 0 \leq q_1 \leq h(\hil{\KK})\\
  & \mu \leq \theta_2 \leq M
\end{align*}
We then eliminate $\theta_2$ by using the first constraint as $\theta_2 = \frac{\mu-q_1 \hil{\KK}}{1-q_1}$. Since $\mu \geq \hil{\KK}$, $\mu \leq \theta_2 \leq M$ simplifies to $q_1 \leq \frac{M-\mu}{M-\hil{\KK}}$ and we admit the solution that $q_1^* = \min\{h(\hil{\KK}),\frac{M-\mu}{M-\hil{\KK}}\}$.

\noindent Finally, given the objective value $V^*$ returned, we solve the dual problem of finding the mechanism $\langle \{g_\theta\}_{\theta \in \Theta},\I \rangle$ that results in $V^*$. To solve this, we appeal to the results in Remark 1 and Proposition 1 in \cite{gentzkow_rothschild-stiglitz_2016} as well as Theorem 3.A.4 in \cite{shaked_univariate_2007}. Consider the discrete distribution $G$ that places probability $q_1^*$ on $\hil{\KK}$ and $1-q_1^*$ on $\theta_2^*$. Then, the corresponding function $g(x) = \int_{0}^x G(t) dt$ defined by:
\begin{align}
    g(x) &= \begin{cases} 0 & x \leq \hil{\KK} \\ q_1^* (x-\hil{\KK}) & \hil{\KK} < x \leq \theta_2^* \\ q_1^*(\theta_2^*-\hil{\KK})+(x-\theta_2^*) & x > \theta_2^* \end{cases}
\end{align}
It is known that $g$ is convex and that, for $f(x) = \int_{0}^x F(t) dt$, $g(x) \leq f(x)$ for all $x \in [0,1]$. Observe that $g'(0) = f'(0)$ and $g'(1) = f'(1)$. From \cite{shaked_univariate_2007} and \cite{ivanov_optimal_2021}, it is known that if there exists $s \in [\hil{\KK},\theta_2^*]$ such that $f(s) = g(s)$, then $g$ is tangent to $f$ at $s$ and therefore is tangent on each linear segment of $g$. It is then a result that the mechanism can then be implemented using a monotone partitional signalling mechanism with $t_0 = 0, t_1 = s, t_2 = 1$ with $s = F^{-1}(q_1^*)$. This is consistent since the probability signal $1$ is generated is represented by $q_1^*$ and $\mathbb{P}[\theta \in [0,s]] = F(s) = F(F^{-1}(q_1^*)) = q_1^*$.

\noindent To show this holds in this setting, suppose by contradiction that for all $s \in [\hil{\KK},\theta_2^*]$, $f(s) > g(s)$. Then since $f-g$ is convex over $[\hil{\KK},\theta_2^*]$, let $\inf_{t \in [\hil{\KK},\theta_2^*]} f(t)-g(t) = \epsilon > 0$ with some minimizer $t^* \in [\hil{\KK},\theta_2^*]$ such that $f(t^*)-g(t^*) = \epsilon$. Moreover, let $\tilde{\theta_2^*}$ solve $(q_1^*+\epsilon)(x-\hil{\KK}) = q_1^*(\theta_2^*-\hil{\KK})+(x-\theta_2^*)$. Then consider the function $\tilde{g}$ as follows:
\begin{align*}
    \tilde{g}(x) &= \begin{cases} 0 & x \leq \hil{\KK} \\ (q_1^*+\epsilon) (x-\hil{\KK}) & \hil{\KK} < x \leq \tilde{\theta_2^*} \\ (q_1^*+\epsilon)(\tilde{\theta_2^*}-\hil{\KK})+(x-\tilde{\theta_2^*}) & x > \tilde{\theta_2^*} \end{cases}
\end{align*}
Observe that $\tilde{g}$ is still convex and that $g\leq\tilde{g}\leq f$, so by \cite{gentzkow_rothschild-stiglitz_2016} the distribution over posterior means with $q_1^*+\epsilon$ on $\hil{\KK}$ and $1-q_1^*-\epsilon$ on $\tilde{\theta_2^*}$ is implementable through a signalling mechanism. However, this violates the optimality of $q_1^*$ and is a contradiction. Hence, there must exist an  $s \in [\hil{\KK},\theta_2^*]$ such that $f(s) = g(s)$.
\subsection{Proof of Proposition \ref{prop:stateful}}
\noindent Observe that we can expand the optimization in \eqref{eqn:stateful_b} to include the constraints as follows:
\begin{align*}
    V^* &= \max_{\langle \{g_\nu\}_{\nu \in \Theta},\I\rangle} \sum_{j=1}^N \sum_{i=1}^{|\I|} p_j g_{\nu_j}(i) \mathbb{I}\{c_j \leq \theta_i \leq \infty\} \\
    &\text{s.t. }\sum_{i=1}^{|\I|} g_{\nu_j}(i) = 1, \forall j\in[N] \\
    &\hspace{1cm} 0 \leq g_{\nu_j}(i) \leq 1, \forall i\in |\I|,j\in[N] \\
    &\hspace{1cm}\theta_i = \frac{\sum_{j=1}^N \nu_j p_j g_{\nu_j}(i)}{\sum_{j=1}^N p_j g_{\nu_j}(i)}, \hspace{0.5cm} \forall i \in |\I|
\end{align*}
As mentioned, analogous to Lemma \ref{lemma:num_signals_bdd}, if two signals $i_1, i_2$ are such that there exists $\gamma_{j} \leq \theta_{i_1}, \theta_{i_2} < \gamma_{j+1}$, we can merge the signals without loss. Therefore, for each interval $[\gamma_{i-1},\gamma_{i})$ with $i\in [N+1]$, at most one posterior mean need be present so $\I = [N+1]$ with $\theta_i \in [\gamma_{i-1},\gamma_{i})$ for all $i$. 
Reducing by observing that $\theta_i \geq \gamma_j$ for all $i \geq j+1$:
\begin{align}
    V^* &= \max_{\{ g_{\nu_j}(i)\}} \sum_{j=1}^N \sum_{i=j+1}^{N+1} p_j g_{\nu_j}(i) \\
    &\text{s.t. }\sum_{i=1}^{N+1}  g_{\nu_j}(i) = 1, \hspace{0.5cm}\forall j\\
    &\hspace{1cm} 0 \leq g_{\nu_j}(i) \leq 1,\hspace{0.5cm} \forall i,j \\
    &\hspace{1cm} \gamma_{i-1} \leq  \frac{\sum_{j=1}^N \nu_j p_j g_{\nu_j}(i)}{\sum_{j=1}^N p_j g_{\nu_j}(i)} \leq \gamma_{i}, \hspace{0.5cm} \forall i\in[N+1]
\end{align}
By renaming $z_{ji} = p_j g_{\nu_j}(i)$ and expanding the final constraint, we arrive at the linear program.
\end{document}